\newcommand{\introparagraph}[1]{\medskip \noindent {\bf  #1.}}  % define own new subsection type: noindent, bold (textsc)
\newtheorem{definition}{Definition}
\newtheorem{proposition}{Proposition}
\newtheorem{claim}{Claim}
\newtheorem{lemma}{Lemma}
\newtheorem{example}{Example}
\newtheorem{problem}{Problem}
\newtheorem{theorem}{Theorem}
\providecommand{\eat}[1]{}
\newcommand{\op}{\texttt{op}}
\newcommand{\sys}{\textsc{Rapidash}}
\newcommand{\sysg}{\textsc{Rapidash }}
\newcommand{\syso}{\textsc{Rapidash}$(\perp)$}
\newcommand{\syskd}{\textsc{Rapidash}$(\mathsf{kd})$}
\newcommand{\sysdisc}{\textsc{Rapidash}$(\mathsf{disc})$}
\newcommand{\facet}{\textsc{Facet}}
\newcommand{\vars}{\textsf{vars}}
\newcommand{\bX}{\mathbf{L}}
\newcommand{\bY}{\mathbf{U}}
\newcommand{\mL}{\mathcal{L}}
\newcommand{\TODO}[1]{\textcolor{red}{#1}}
\newcommand{\af}[1]{\textcolor{blue}{[Anna: #1]}}
\newcolumntype{Y}{>{\centering\arraybackslash}X}
\newcolumntype{L}{>{\leavevmode\color{magenta}}l}
\newcolumntype{C}{>{\leavevmode\color{magenta}}c|}
\newcolumntype{V}{>{\raggedleft\arraybackslash\leavevmode\color{magenta}}X}
\newcolumntype{R}{>{\raggedleft\arraybackslash}X}
\newcolumntype{N}{>{\raggedleft\arraybackslash\leavevmode\color{magenta}}p{8mm}}
\newcolumntype{M}{>{\centering\leavevmode\color{magenta}}p{0.9cm}}
\definecolor{light-gray}{gray}{0.95}
\def\mytokenshelp#1 #2\relax{\allowbreak\grayspace\tokenscolor{#1}\ifx\relax#2\else
 \mytokenshelp#2\relax\fi}
\newcommand\tokenscolor[1]{\colorbox{light-gray}{\textcolor{black}{%
  \ttfamily\mystrut\smash{\detokenize{#1}}}}}
\def\mystrut{\rule[\dimexpr-\dp\strutbox+\fboxsep]{0pt}{%
 \dimexpr\normalbaselineskip-2.25\fboxsep}}
\def\grayspace{\hspace{0pt minus \fboxsep}}
\definecolor{ForestGreen}{RGB}{34,139,34}
\newcommand{\mysimplenote}[1]{{#1}}
\newcommand{\fp}[1]{\textcolor{teal}{\mysimplenote{[fp]~#1}}}
\begin{document}

\title{\sys: Efficient Constraint Discovery via Rapid Verification}

\author{Zifan Liu}
\affiliation{%
    \institution{University of Wisconsin-Madison}
}
\email{zifan@cs.wisc.edu}

\author{Shaleen Deep}
\affiliation{%
    \institution{Microsoft}
}
\email{shaleen.deep@microsoft.com}

\author{Anna Fariha}
\affiliation{%
    \institution{University of Utah}
}
\email{afariha@cs.utah.edu}

\author{Fotis Psallidas}
\affiliation{%
    \institution{Microsoft}
}
\email{Fotis.Psallidas@microsoft.com}

\author{Ashish Tiwari}
\affiliation{%
    \institution{Microsoft}
}
\email{ashish.tiwari@microsoft.com}

\author{Avrilia Floratou}
\affiliation{%
    \institution{Microsoft}
}
\email{Avrilia.Floratou@microsoft.com}

\begin{abstract}
Denial Constraint (DC) is a well-established formalism that captures a wide range of integrity constraints commonly encountered, including candidate keys, functional dependencies, and ordering constraints, among others. Given their significance, there has been considerable research interest in achieving fast verification and discovery of exact DCs within the database community. Verification entails detecting whether a given DC holds true within a specific dataset, while discovery focuses on the automated mining of DCs satisfied on the dataset. Despite the significant advancements in the field, prior work exhibits notable limitations when confronted with large-scale datasets. The current state-of-the-art exact DC verification algorithm demonstrates a quadratic (worst-case) time complexity relative to the dataset's number of rows. In the context of DC discovery, existing methodologies rely on a two-step algorithm that commences with an expensive data structure-building phase, often requiring hours to complete even for datasets containing only a few million rows. Consequently, users are left without any insights into the DCs that hold on their dataset until this lengthy building phase concludes.

In this paper, we introduce \sys, a comprehensive framework for DC verification and discovery. Our work makes a dual contribution. First, we establish a connection between orthogonal range search and DC verification. We introduce a novel exact DC verification algorithm that demonstrates near-linear time complexity, representing a theoretical improvement over prior work. Second, we propose an anytime DC discovery algorithm that leverages our novel verification algorithm to gradually provide DCs to users, eliminating the need for the time-intensive building phase observed in prior work. To validate the effectiveness of our algorithms, we conduct extensive evaluations on four large-scale production datasets. Our results reveal that our DC verification algorithm achieves up to $40 \times$ faster performance compared to state-of-the-art approaches. Furthermore, we demonstrate the superiority of our DC discovery algorithm by showcasing its ability to produce constraints within the initial $10$ minutes of execution, while prior methods fail to generate any output within the first $48$ hours of execution.

\end{abstract}

\maketitle

\section{Introduction}  \label{sec:intro}

Integrity constraints play a pivotal role in a wide range of data analysis tasks such as data exploration~\cite{abedjan2016detecting, Fariha0RGM21}, data cleaning and repair~\cite{rekatsinas2017holoclean, giannakopoulou2020cleaning}, data synthesis~\cite{gekamino}, and query optimization~\cite{kossmann2022data}. By enforcing integrity constraints, organizations can ensure the reliability, consistency, and accuracy of their data, enabling them to make informed decisions, derive meaningful insights, and extract maximum value from their datasets. One class of constraints that is of particular interest is known as \textit{Denial Constraints} (DCs)~\cite{chu2013discovering}. DCs are appealing since they are expressive enough to capture many integrity constraints that are useful in practice such as functional dependencies, ordering constraints, and unique column combinations among others. 

% Automated discovery and validation of constraints is highly desirable since finding constraints manually is error-prone and unscalable. At the same time, discovering DCs is a tractable problem in practice and several algorithms have been proposed in the literature to mine them efficiently. The related problem of detecting whether a given DC is satisfied or violated (i.e. DC verification) over a given dataset is also particularly useful to assess the quality of the dataset.

\begin{example} \label{ex:1}
    Table~\ref{table:tax} shows a sample of a tax dataset  that contains information about tax rates for people in different US states. Several rules are true about this dataset: (1) \texttt{SSN} column is a candidate key, (2) \texttt{Zip}$\rightarrow$\texttt{State} is a functional dependency, and (3) for all people in the same state, the person with higher salary has a higher tax rate. Each of these rules can be expressed as DCs as we will see in Section~\ref{sec:prelim}.
\end{example}

\begin{table}[t]
    \small
    \caption{\texttt{Tax}
rates for people in different states in the USA.} \label{table:tax}
    \vspace{-3mm}
     \begin{tabularx}{\linewidth}{ @{\extracolsep{\fill}}l l l r r l} 
     \toprule
     & \texttt{SSN} & \texttt{Zip} & \texttt{Salary} & \texttt{FedTaxRate} & \texttt{State} \\
     \midrule
    $t_1$ & 100 & 10108 & 3000 & 20\% & New York \\
    $t_2$ & 101 & 53703 & 5000 & 15\% & Wisconsin \\
    $t_3$ & 102 & 53703 & 6000 & 20\% & Wisconsin \\
    $t_4$ & 103 & 53703 & 4000 & 10\% & Wisconsin \\
     \bottomrule
     \vspace{-5mm}
     \end{tabularx}
     \end{table}

\looseness-1
Our work emphasizes on two important scenarios: \textbf{exact} DC \textbf{verification} and \textbf{discovery}. DC verification involves detecting whether a given DC is satisfied on a specific dataset and is particularly valuable during data exploration, where analysts aim to ascertain the presence or absence of specific patterns within the dataset. Additionally, it serves as a valuable tool in assessing dataset quality, enabling the identification of noisy or inconsistent data instances~\cite{Fariha0RGM21}. DC discovery involves the automatic discovery of exact DCs from a given dataset, which holds significant appeal due to the inherent challenges associated with the manual identification of DCs. The manual approach not only requires expertise and significant time investment but also suffers from a higher likelihood of errors, given the intricate and ever-evolving nature of datasets.

\looseness-1 In recent years, substantial advancements happened in the field of exact DC verification and discovery~\cite{pena2019discovery, pena2020efficient, pena2021fast, pena2022fast}. However, our practical experience in applying some of these approaches to real-world production datasets has unveiled noteworthy limitations of existing methods (refer to Section~\ref{sec:evaluation} for comprehensive details).
First, in the context of DC verification, the best-known algorithm, \facet~\cite{pena2021fast}, has a worst-case time and space complexity  $\Omega(|\R{R}|^2)$ on a given relation $\R{R}$ with cardinality $|\R{R}|$ (number of rows). In this work, we make a connection between the problem of DC verification and \emph {orthogonal range search}~\cite{bentley1979data, bentley1980decomposable}, a celebrated line of work in computational geometry, which studies the problem of determining which $k$-dimensional objects in a set intersect with a given query object. We show that it is possible to design a near-optimal algorithm for verifying a given DC over a specific dataset by leveraging techniques employed for orthogonal range search. Our proposed algorithm has a time complexity of $O(|\R{R}| \log^{f(\varphi)} |\R{R}|)$, where $f(\varphi)$ is a parameter that is dependent only on the characteristics of the DC $\varphi$ and not on the input dataset $\R{R}$. This represents a theoretical improvement over prior work and translates into an order of magnitude better performance in practice. 

In the context of exact DC discovery, prior work~\cite{chu2013discovering, bleifuss2017efficient, pena2019discovery, pena2020efficient, pena2021fast} follows a two-step process: (1)~building an intermediate data structure called \textit{evidence set} from the input, which is the most computationally demanding aspect of the DC discovery process~\cite{pena2022fast}, and (2)~mining the DCs from the evidence set using various set-covering algorithms, which could also be costly depending on the number of DCs and the size of the evidence set.
Our experience of applying this two-phase approach has unveiled that the time required to construct the evidence set is often prohibitive. Even for medium-sized datasets (e.g., $5$ million rows and $30$ columns), it can take up to several hours just to construct the evidence set. 
Recent work~\cite{pena2022fast} has shown that parallelization can reduce the time taken for evidence-set construction. However, parallelization improves performance by a constant factor, and, thus, is not a substitute for better worst-case complexity. Moreover, it does not necessarily lead to more scalable performance as the dataset size increases since there is a limit in the degree of parallelism.

% \af{This needs further clarification. One can say why not use parallelization for mining and verifying DCs also. I feel this paper assumes lots of prior expertise/knowledge around the related literature but reviewers might not be an expert. We should try to clarify why parallelization fails for DC discovery (phase 2).}
% Second, the cost of mining DCs from the evidence sets can be prohibitively expensive wrt to the time and space requirements \af{This reads a little vague to me. What are usual time and space requirements for discovering DCs? Can't one just run these offline and work with them later? What I am trying to say is it is not clear WHY fast DC discovery is required in the first place, I would go for some real-world evidence of how this is slowing down data analytics or other processes. We can also highlight that it is important because of evolving data. Because each time data changes via new updates or deletions, one needs to re-run the DC discovery process which blocks them on further downstream tasks.}. 

After talking to various customers to better understand their requirements and expectations regarding constraint discovery\footnote{We are omitting more details due to double-anonymization considerations.}, we came to the conclusion that this two-phase approach has a fundamental limitation that can lead to poor user experience.
In particular, the approach is ``all or none'', i.e., to produce any DC, it needs to complete the full evidence set construction (which is time-consuming), at the end of which it reports all DCs. However, in practical scenarios, users often have distinct preferences and requirements. Firstly, they prioritize receiving confirmed DCs promptly, starting from simpler constraints and gradually progressing towards more complex ones ($R_{1}$). Secondly, users value the flexibility of terminating the discovery process prematurely if they are satisfied with the set of DCs already mined at any specific point in time ($R_{2}$). This perspective underscores the necessity for a more flexible DC-discovery process to enhance the overall user experience and motivates the need for designing a new solution. 

% \af{Playing devlil's advocate: what would it take for making the existing approaches somewhat anytime. I understand they are not anytime at this moment, but how hard it is to make them that? We must discuss this to avoid potential reviewer criticism.}

In this work, we propose a novel \textit{anytime}~\cite{zilberstein1996using} DC discovery algorithm that allows for progressive constraint discovery and early  termination, and, thus, satisfies requirements $R_{1}$ and $R_{2}$. At a high-level, our algorithm performs a lattice-based traversal of the space of DCs and invokes our novel DC verification algorithm to confirm whether a given constraint holds. Unlike prior work, our algorithm does not have a blocking building phase and bypasses the evidence-set-construction-based paradigm.

\smallskip
\introparagraph{Our contributions} Our key contribution is a general framework, \sys, that relies on a novel approach for exact DC verification and discovery leveraging the connection to orthogonal range search. Specifically, we make the following contributions: 

\introparagraph{1. \emph{A novel DC verification algorithm}} We present a near-optimal algorithm for verifying a given DC on a dataset $\R{R}$. We prove that our proposed algorithm can achieve a near-linear time and space complexity wrt.\ dataset size. This represents a significant improvement over the best-known verification algorithm~\cite{pena2021fast}, which has a worst-case quadratic complexity (both time and space). Further, we show that in certain scenarios, our algorithm can run in only linear space while still achieving provably sub-quadratic running time.

\introparagraph{2. \emph{Efficient DC discovery}} We introduce the problem of \textit{anytime} DC discovery and propose a lattice-based algorithm that relies on our novel DC verification algorithm to provide better performance than prior work, which relies on evidence sets. 

\looseness-1
\introparagraph{3. \emph{Experimental evaluation}} We conduct an extensive empirical evaluation over four production datasets that are an order of magnitude larger than those used in prior work. We show that \sysg achieves up to $40\times$ speedup over the state of the art~\cite{pena2021fast} for exact DC verification. For DC discovery, our anytime algorithm can produce all single-column DCs (e.g., \eat{whether a column is sorted or not, }single-column candidate keys, column with identical values etc.) within the first $10$ minutes, while prior work~\cite{bleifuss2017efficient, pena2019discovery} fails to produce any output within the first $48$ hours. We also show that \sysg\ scales better 
%\af{Can we give some numbers?} 
than prior work. 
%\sd{I think saying "scales well`` is enough. There is no number to share. Scaling means the constants in big-O of our algorithm are small and is close to linear.}

\eat{The rest of the paper is organized as follows: Section~\ref{sec:prelim} provides background on denial constraints and presents our problem statement. Section~\ref{sec:limitation} discusses limitations of prior work and Section~\ref{sec:verification} presents the design of our proposed DC verification algorithm along with various optimizations to handle different classes of constraints. Section~\ref{sec:discovery} presents the design of our DC discovery algorithm. Section~\ref{sec:evaluation} presents our experimental findings and Section~\ref{sec:related} discusses related work. Finally, Section~\ref{sec:conclusions} concludes the paper and discusses opportunities for future work.}

\eat{In this paper, we propose a novel DC discovery algorithm that addresses all of the above limitations. The key insight that we exploit is that a fast DC verification algorithm used in conjunction with a lattice-based approach for DC discovery is sufficient to satisfy all our requirements. Lattice-based approaches have been used in prior works for dependency discovery (such as functional dependencies). However, lattice-based methods have mostly been dismissed for DC discovery since building lattices over column combinations might be prohibitive. However, all prior work on DC discovery implicitly assumes that the users are interested in obtaining all DCs that hold for a given dataset. Based on our analysis and discussion with customers as well as engineers in the data quality domain, we find that this is not the case. Most customers prefer rules that are interpretable and easy to understand. Prior work has already captured this property by remarking that \textit{succintness} of the rules is a key property for identifying which constraints are ranked higher. Concretely, this translates to generating constraints that span at most a few columns. This property in turn helps us ensure that only a fixed number of levels in the lattice need to be explored. We make the following contributions.}

\section{Background}  \label{sec:prelim}
In this section, we provide background on terminology and notations that will be used throughout the paper. We also define the two problems that we are tackling (DC verification and discovery).

\introparagraph{Relations} Let \R{R}\ be the input relation and \vars$(\R{R})$ denote the finite set of attributes (i.e. the columns). We use $|\R{R}|$ to denote the cardinality of the relation. We will use \A{A}, \A{B}\ to denote attributes, \tup{s}\ and \tup{t}\ to denote \emph{tuples}, and \proj{t}{\A{A}}\ denotes the value of an attribute \A{A}\ in a tuple \tup{t}. Throughout the paper, we assume bag semantics where the relation can have the same tuple present multiple times. \eat{We will use ${w} \subseteq \vars(\R{R})$ to denote a subset of variables.}

\introparagraph{Denial Constraints (DCs)} DCs express predicate conjunctions to determine conflicting combinations of column values. They generalize other integrity constraints, including unique column combinations, functional dependencies, and order dependencies. We define a predicate $p$ as the expression $\proj{s}{\A{A}}\ \op\ \proj{t}{\A{B}}$ where $\tup{s}, \tup{t} \in \R{R}$, $\op \in \{=,\neq, \geq, >, \leq, <\}$ and $\A{A}, \A{B} \in \vars(\R{R})$. We will refer to $\neq$ as disequality and $\geq, >, \leq, <$ as inequalities. All operators except equality will be collectively referred to as non-equality operators. A DC $\varphi$ is a conjunction of predicates of the following form: 

$$ \forall \tup{s}, \tup{t} \in R, s \neq t: \quad \neg (p_1 \land \dots \land p_m) $$

A tuple pair $(\tup{s}, \tup{t})$ is said to be a violation if all predicates in $\varphi$ evaluate to true. We will say that a $\varphi$ holds on $\R{R}$ if there are no violations, i.e., the DC is \emph{exact}. An exact DC is said to be minimal if no proper subset of its predicates forms another exact DC.

A predicate is said to be \emph{homogeneous} if it is of the form $\proj{s}{\A{A}}\ \op\ \proj{t}{\A{A}}$ or $\proj{s}{\A{A}}\ \op\ \proj{s}{\A{B}}$, i.e. it is either defined over a single column $A$ or it is defined over a single tuple but two different columns, and \emph{heterogeneous} if it is of the form $\proj{s}{\A{A}}\ \op\ \proj{t}{\A{B}}$. We will refer to $\proj{s}{\A{A}}\ \op\ \proj{t}{\A{A}}$ as row-level homogeneous predicate since such a predicate is comparing across two rows and $\proj{s}{\A{A}}\ \op\ \proj{s}{\A{B}}$ as column-level homogeneous predicate since it compares two columns of the same row. 

Since most DCs of interest contain only row-level homogeneous predicates (such as ordering dependencies~\cite{grya2012fundamentaab}, functional dependencies, candidate keys, etc.), for simplicity, we will use the term homogeneous DC to refer a DC that contains only row-level homogeneous predicates. We will use the term mixed homogeneous DC to refer to DCs that contain both row and column-level homogeneous DC. A heterogeneous DC can contain all types of predicates. Without loss of generality, we will assume that each column of \R{R}\ participates in at most one predicate of a homogeneous DC. We will use $\vars_{\op}(\varphi)$ to denote the set of columns in a homogeneous DC that appear in some predicate with the operator as $\op$.

\begin{example}
    Continuing from \autoref{ex:1}, each constraint can be expressed using a DC as follows: (1) $\varphi_1: \neg(\proj{s}{\mathtt{SSN}} = \proj{t}{\mathtt{SSN}})$; (2) $\varphi_2: \neg(\proj{s}{\mathtt{Zip}} = \proj{t}{\mathtt{Zip}} \land \proj{s}{\mathtt{State}} \neq \proj{t}{\mathtt{State}})$; (3) $\varphi_3: \neg(\proj{s}{\mathtt{State}} = \proj{t}{\mathtt{State}} \land \proj{s}{\mathtt{Salary}} <  \proj{t}{\mathtt{Salary}} \land \proj{s}{\mathtt{FedTaxRate}} >  \proj{t}{\mathtt{FedTaxRate}})$. The universal quantification is left implicit. Let us fix our attention to $\varphi_3$. Note that $\vars_{=}(\varphi_3) = \{\mathtt{State}\}, \vars_{<}(\varphi_3) = \{\mathtt{Salary}\}$, and $\vars_{>}(\varphi_3) = \{\mathtt{FedTaxRate}\}$.

    \smallskip
    \noindent Observe that all the DCs above are homogeneous (i.e. contain only row-level homogeneous predicates). An example of a heterogeneous DC is $\varphi_4 : \neg(\proj{s}{\mathtt{Salary}} < \proj{t}{\mathtt{FedTaxRate}})$.
    All the DCs hold on the relation $\mathtt{Tax}$ defined in Table~\ref{table:tax} and are minimal exact DCs. \qed
\end{example}

\eat{\fp{numerical attributes can also be categorical. maybe change to continuous?.}}

\introparagraph{Predicate Space} The space of DCs is governed by the \emph{predicate space}, the set of all predicates that are allowed on $\R{R}$. As noted in~\cite{pena2019discovery, pena2020efficient}, a predicate is meaningful when a proper comparison operator is applied to a pair of comparable attributes. Specifically, all the six operators can be used on numerical attributes (i.e. they are continuous), e.g., age and salary, but only $=$ and $\neq$ 
%equality and disequality \af{inequality} \sd{inequality is defined as $<,\leq,>,\geq$. They are not allowed on categorical}\af{Then I would go for match and non-match or something else that makes it distinguishable} 
can be used on categorical attributes such as name and address. Two attributes are said to be comparable if: $(i)$ they have the same type; $(ii)$ the active domain overlap is at least $30\%$~\cite{pena2019discovery, pena2020efficient}. For example, in~\autoref{ex:1}, column \texttt{Salary} and \texttt{State} are not comparable since they have different type, and \texttt{SSN} and \texttt{Zip} are not comparable since the values do not have any overlap.

\subsection{Problem Statement}

We use the term {\em{DC verification}} for the process of determining whether a DC holds on a relation \R{R}\ and {\em{DC discovery}}\footnote{We will use the term discovery and mining interchangeably.} to refer to the process of finding (some or all) exact, minimal DCs over \R{R}. In this paper, we focus on the following two problems.

\eat{\fp{what does ``scalable'' mean in Problem 2?}}

\begin{problem}
    Given a relation \R{R}\ and a DC $\varphi$, determine whether $\varphi$ holds on \R{R}.
\end{problem}

\begin{problem}
    Given a relation \R{R}, design an efficient, anytime DC discovery algorithm.
\end{problem}
An anytime algorithm is required to produce an increasing number of exact DCs as time progresses in a way that we have some exact DCs even if the algorithm is interrupted before it terminates. 
%\TODO{what is a better way to state the second problem? I want to emphasize that we are low resource, anytime -- should we define these as desiderata and then say that in the problem statement?}

\introparagraph{Computational Model} We focus on evaluation in the main-memory setting. We assume the RAM~\cite{hopcroft2001introduction} model of computation where tuple values and integers take $O(1)$ space and arithmetic operations on integers, as well as
memory lookups, are $O(1)$ operations. Further, we assume perfect hashing for our hash tables where insertions and deletions can be reflected in $O(1)$ time and a hash table takes space linear in the number of entries it stores. Throughout the paper, we will consider the data complexity of the problems where the DC size is assumed to be a constant.
\section{Limitations of existing Solutions} \label{sec:limitation}

We now discuss the limitations of the existing solutions for exact DC verification and discovery. In Section~\ref{sec:evaluation}, we experimentally demonstrate some of these limitations using real-world datasets.

\smallskip
\introparagraph{DC Verification} We begin by giving a brief description of the key ideas underlying \textsc{Facet}, the state-of-the-art system for DC verification. Let \textsf{tids} denote a set of tuple identifiers. All tuples in relation $\R{R}$ can be represented as $\mathsf{tids}_{\R{R}} = \{\tup{t}_1, \dots, \tup{t}_{|\R{R}|}\}$. An ordered pair $(\mathsf{tids}_1, \mathsf{tids}_2)$ represents all tuples pairs $(\tup{s},\tup{t})$ such that $\tup{s} \in \mathsf{tids}_1, \tup{t} \in \mathsf{tids}_2, \tup{s} \neq \tup{t}$. \textsc{Facet} processes one predicate of the DC at a time, taking a set of ordered pairs $(\mathsf{tids}_1, \mathsf{tids}_2)$ as input and generating another set of ordered pairs $(\mathsf{tids}'_1, \mathsf{tids}'_2)$ that represent tuple pairs that satisfy the predicate as the output. This process is known as \emph{refinement} and \facet\ refines each predicate using specialized algorithms for each operator. The output of a refinement is consumed as the input for refining the next predicate. At the end of processing all the predicates, we get all tuples pairs that satisfy all the predicates and thus, represent the violations.

\eat{\fp{I may be a bit lost with the following example, but $\varphi_3$ does not contain a negation (in contrast to the one defined earlier). hence, the last sentence on ``violation'' is not really violations but rather what satisfies $\varphi_3$. }}

\eat{\fp{also, it was not clear that $\mathtt{FedTaxRate} = {\color{red} \mathbf{22}}$ was trying to set the original example to a different value. (btw, why not having 22 in the initial example, and setting it here to something else so that the original example leads to a non-empty output for $\varphi_3$? if this is because through the empty set the predicates holds on the whole dataset, then maybe say it here ``...thus, the output is the empty set. Hence, $\varphi_3$ holds on the whole dataset.''}}

\begin{example} \label{ex:facet}
    Consider the DC $\varphi_3: \neg (\proj{s}{\mathtt{State}} = \proj{t}{\mathtt{State}} \land \proj{s}{\mathtt{Salary}} <  \proj{t}{\mathtt{Salary}} \land \proj{s}{\mathtt{FedTaxRate}} >  \proj{t}{\mathtt{FedTaxRate}})$. The refinement of predicate $p_1: \proj{s}{\mathtt{State}} = \proj{t}{\mathtt{State}}$ produces the set $\{(\{t_2, t_3, t_4\}$ $,\{t_2, t_3, t_4\})\}$ with a single ordered pair. This ordered pair represents the set of tuple pairs: $(t_2, t_3),$ $(t_2, t_4),$ $(t_3, t_2), (t_3, t_4),$ $(t_4, t_2), (t_4, t_3)$ since each of them satisfy $p_1$. Next, this singleton set is provided as input to predicate $p_2: \proj{s}{\mathtt{Salary}} <  \proj{t}{\mathtt{Salary}}$ which produces a new set $\{(\{t_4\}, \{t_2, t_3\}), (\{t_2\}, \{t_3\})\}$ since the \texttt{Salary} for $t_4$ is smaller than both $t_2$ and $t_3$ but \texttt{Salary} for $t_2$ is smaller only than $t_3$. Finally, we process predicate $p_3: (\proj{s}{\mathtt{FedTaxRate}} > \proj{t}{\mathtt{FedTaxRate}})$. However, note that none of the tuple pairs $(t_4, t_2), (t_4, t_3), (t_2, t_3)$ satisfy the predicate and thus, the output is the empty set. Hence, $\phi_3$ holds on the whole dataset $\mathtt{Tax}$. Let us modify $\mathtt{Tax}$ by setting $t_4.\mathtt{FedTaxRate}$ to  ${\color{black} \mathbf{22}}$ and call it $\mathtt{Tax'}$. Then the output of the refinement by predicate $p_3$ would be $\{(\{t_4\}, \{t_2\}), (\{t_4\}, \{t_3\})\}$ which represents the two violations of $\varphi_3$ on $\mathtt{Tax'}$. \qed
\end{example}

\noindent \facet\ contains algorithms that are custom-designed for the different predicate structures. We now highlight the three key sources of inefficiency in \textsc{Facet}.

\begin{enumerate}[wide, labelwidth=!, labelindent=0pt]
    \item \textbf{Complexity of \textsf{IEJoin}.} \textsc{Facet} and \textsc{Hydra} both use \textsf{IEJoin} \cite{khayyat2015lightning} as the algorithm for processing inequalities. The algorithm is designed to process two inequalities at a time and thus operates on two sets of tuple pairs simultaneously (akin to two relations). The running time complexity of \textsf{IEJoin} is $O(|R| \cdot |S|)$ for processing inequality joins between two relations $R$ and $S$ (although its space complexity is only $O(|R| + |S|)$). As noted in~\cite{pena2021fast}, \textsf{IEJoin} is severely under-performing for predicates of low selectivity.

    \item \textbf{Complexity of \textsf{Hash-Sort-Merge}.} Since \textsf{IEJoin} is designed for at least two predicates with inequality, \textsc{Facet} proposed two novel optimizations to process DCs with a single inequality predicate: \texttt{\bfseries Hash-Sort-Merge (HSM)} and \texttt{\bfseries Binning-Hash-Sort-Merge (BHSM)}. However, it can be shown that both \texttt{HSM} and \texttt{BHSM} still require a quadratic amount of running time and space in the worst-case. Similarly, processing of predicates containing disequality also requires quadratic time and space in the worst-case.

    \item Since \facet\ processes one predicate at a time, it needs to make at least one full pass over the dataset. As we will see later, this is not always necessary.
\end{enumerate}

\smallskip

\begin{sloppypar}
\introparagraph{DC Discovery} As mentioned in Section~\ref{sec:intro}, the first (and the most expensive) step performed by existing DC discovery algorithms is the computation of the evidence set. Given a predicate space $P$ and a pair of tuples $(s, t)$, the evidence $e(s, t)\subseteq P$ is the subset of predicates satisfied by the tuple pair. The evidence set is the set of evidences for all tuple pairs in the dataset. For example, in Table~\ref{table:tax} assuming the predicate space $P = \{ p_1: s.\mathtt{SSN} \neq t.\mathtt{SSN}, p_2: s.\mathtt{Zip} \neq t.\mathtt{Zip}, p_3: s.\mathtt{Zip} = t.\mathtt{Zip}, p_4: s.\mathtt{FedTaxRate} \neq t.\mathtt{FedTaxRate}, p_5: s.\mathtt{FedTaxRate} = t.\mathtt{FedTaxRate}, p_6: s.\mathtt{FedTaxRate} > t.\mathtt{FedTaxRate}, p_7:  s.\mathtt{FedTaxRate} < t.\mathtt{FedTaxRate} \}$, the evidences $e$ for all the tuple pairs $(t_i, t_j)$ are as follows (we show the cases where $i < j$, and the rest can be implied by symmetricity): 
\end{sloppypar}
\begin{align*}
    e(t_1, t_2) &= \{p_1, p_2, p_4, p_6\}, \quad e(t_1, t_3) = \{p_1, p_2, p_5\} \\
    e(t_1, t_4) &= \{p_1, p_2, p_4, p_6\}, \quad e(t_2, t_3) = \{p_1, p_3, p_4, p_7\} \\
    e(t_2, t_4) &= \{p_1, p_3, p_4, p_7\}, \quad e(t_3, t_4) = \{p_1, p_3, p_4, p_6\}
\end{align*}

\looseness-1
The evidence set will contain $4$ evidences since $e(t_1, t_2)$ and $e(t_1, t_4)$ are identical (and so are $e(t_2, t_3)$ and $e(t_2, t_4)$). As discussed before, evidence set construction is a blocking step since discovery cannot start until the computation has been completed. The time complexity of the construction process is (possibly) super-linear dependency on $|\R{R}|$ depending on the characteristics of the tuples and columns in the input. Our experiments in Section~\ref{sec:evaluation} demonstrate super-linear (closer to $|\R{R}|^{3/2}$) complexity in practice on our datasets. In terms of space, in the worst case, the size of the evidence set could be as large as $|\R{R}|^2$, which is undesirable. 

These drawbacks motivate the necessity for designing a new algorithm that has the \textit{anytime} property. The reader may wonder whether it is possible to adjust evidence set construction to enable an anytime DC discovery algorithm that starts emitting simpler constraints progressing towards more complex ones over time. Intuitively, such an algorithm would be possible if we could create an evidence set catered to DC constraints consisting of one predicate only, discover the ones that are satisfied and return them to the user, increment the existing evidence set to cover constraints with two predicates and repeat the process until the full space of constraints has been explored or the user terminates the process. However, as shown in the example above, the evidence set construction relies on the predicate space and not the DC constraint space. As a result, the evidence set used to mine constraints with one predicate is exactly the same as the one used to mine constraints with two predicates. Thus, incremental construction of the evidence set (and evidence set-based anytime DC discovery) is unlikely.

% incrementally to prioritize the generation of constraints of smaller sizes. We argue intuitively that this may not be possible. In the example above, $p_1$ is present in each of the evidences which is enough to lead us to the conclusion that \texttt{SSN} is a primary key. However, if there was another tuple $t_5$ that had the same \texttt{SSN} as any of the other tuples, our conclusion will be incorrect, demonstrating that even a single missed piece of evidence could have an impact on the correctness of the constraints generated.
\section{\sysg Verification} \label{sec:verification}

In this section, we describe the \sysg verification algorithm. Our algorithm builds appropriate data structures to store the input data (leveraging existing work on orthogonal range search), and issues appropriate queries to find violations of a given DC. 

% We present various versions of the algorithm depending on whether the two phases are separate or interleaved and discuss the time/space trade-offs. All the variants presented in this work have better worst-case theoretical complexity than state-of-the-art. {\color{red} Is this correct?}

% Without loss of generality, we will assume that all predicates of the DC contain only equalities and inequalities but no disequality and that the DC is homogeneous. Both of these assumptions will be removed later. Finally, we assume that the categorical columns in \R{R}\ have been dictionary-encoded to integers, a standard assumption in line with prior work~\cite{pena2019discovery, pena2020efficient}.

\eat{Without loss of generality, we will assume that all predicates of the DC contain only equalities and inequalities but no disequality. Indeed, any DC containing disequalities can be expanded into a set of equivalent DCs that contain only equalities and inequalities. We will remove the homogeneous assumption on $\varphi$ later. First, we present some background on orthogonal range search that will be useful.}

\subsection{Orthogonal Range Search}

\eat{\fp{A also needs to be $ \in \mathbb{N}^k$}}

In this section, we present some background on orthogonal range search. Given a totally ordered domain $\mathbb{N}$, let $A \subseteq \mathbb{N}^k$, for some $k \geq 1$, of size $N$. Let $\bX = (\ell_1, \dots, \ell_k)$ and $\bY = (u_1, \dots, u_k)$ be such that $\bX, \bY \in \mathbb{N}^k$ and $\ell_i \leq u_i$ for all $i \in [k]$. 

\begin{definition} \label{def}
An orthogonal range search query is denoted by $(\bX, \bY)$, and its evaluation over $A$ consists of enumerating the set 

$$Q(A) = \{ a \in A \mid \bigwedge \limits_{i \in k} \ell_i\ \op_1\ a_i\ \op_2\ u_i \}$$

where $\op_1, \op_2 \in \{<, \leq\}$.
\end{definition}

In other words, $\bX$ and $\bY$ form an axis-aligned hypercube in $k$ dimensions, and $Q(A)$ reports all points in $A$ that lie on/within the hypercube. The Boolean version of the orthogonal range search problem consists of determining if $Q(A)$ is empty or not.

\begin{example} \sloppy
    Consider the Table~\texttt{Tax} from \autoref{ex:1}. Let $A$ be the set of two-dimensional points obtained by projecting \texttt{Tax} on ${\mathtt{Salary}}$ and $\mathtt{FedTaxRate}$. Let $\bX = (3500, 5)$ and $\bY = (4500, 22)$. Then, the orthogonal range query $(\bX, \bY)$ is asking for all points such that the \texttt{Salary} is between $3500$ and $4500$, and the \texttt{FedTaxRate} is between $5$ and $22$. In Table~{\texttt{Tax}}, only $\tup{t}_4$ satisfies the criteria (its values of \texttt{Salary} and \texttt{FedTaxRate} are $4000$ and $10$ respectively). Thus, the result of the orthogonal range search query $(\bX, \bY)$ is $\{(4000, 10)\}$. \qed
\end{example}

In the presentation of the algorithms, we will assume that the range search data structure is built over $k$ dimensions and has two methods in its API: 

\begin{enumerate}[leftmargin=*]
    \item \textsf{booleanRangeSearch}$(\bX, \bY)$: returns a Boolean value if there is a point that lies in the axis-aligned hypercube formed by $\bX$ and $\bY$. The operators $\op_1$ and $\op_2$ used in~\autoref{def} will be clear from the context in which the function is called.
    \item \textsf{insert}$(t)$: inserts a $k$-dimensional tuple $t$ into the data structure.
\end{enumerate}

The two most celebrated data structures for orthogonal range search that are widely used in practice are range trees~\cite{bentley1979data} and $k$-d trees~\cite{bentley1979data}. We will review their complexity and trade-offs when analyzing the complexity of our DC verification algorithm.

\eat{Seminal work by Bentley et al.~\cite{bentley1979data} presented two algorithms that can answer orthogonal range search queries efficiently by trading off data structure construction time and space with query answering time. These algorithms leverage two different tree structures (range trees and k-d trees) and their time/space characteristics are presented below.

\begin{theorem}[\cite{bentley1979data}] \label{thm:range}
    Given $N$ k-dimensional points, there exists a data structure, known as a \textbf{range tree}, that can be built in pre-processing time $P(N) = O(N \log^{k-1} N)$, using $S(N) = O(N \log^{k-1} N)$ space, and can answer any orthogonal range search query\footnote{Throughout the paper, we use $\log^k N$ to mean $(\log N)^k$ and not iterated logarithms.} $Q(A)$ in time $T(N) = O(\log^{k-1} N + |Q(A)|)$.
\end{theorem}

\begin{theorem}[\cite{bentley1979data}] \label{thm:kd}
    Given $N$ k-dimensional points, there exists a data structure, known as a \textbf{k-d tree} that can be built in pre-processing time $P(N) = O(N \log N)$, using $S(N) = O(N)$ space, and can answer any orthogonal range search query $Q(A)$ in time $T(N) = O(N^{1-\frac{1}{k}} + |Q(A)|)$.
\end{theorem}

Note that any algorithm that evaluates the orthogonal range search query can solve the Boolean version of the problem by simply stopping the algorithm as soon as we identify that the output is non-empty (i.e. the algorithm is stopped as soon as we know that $|Q(A)| \geq 1$).

Our \sysg algorithm relies on orthogonal range search and can leverage both range and k-d trees. We present experimental results with both variants in Section~\ref{sec:evaluation}. Before presenting the details of the algorithm, we also introduce dynamic range which is used by \sysg for incremental DC verification.

{\color{red} AF: This section needs more work}
\smallskip
\introparagraph{Dynamic range search} Using standard \emph{static-to-dynamic} transformation techniques~\cite{bentley1980decomposable, overmars1983design}, it is well-known that the static data structure for range searching can be transformed into dynamic data structures where the points can be incrementally inserted and deleted. 

\begin{proposition} \label{prop:incremental}
    Given a static range searching data structure with pre-processing time $P(n)$ and query time $T(n)$ on an input of size $n$, there exists a data structure where a point can be inserted in time $O(P(n) \log n / n)$ and the query time is $O(T(n) \log n)$.
\end{proposition}

We note that the data structures generated by the transformation are practical and are shipped in standard programming language libraries~\cite{haskellkd}.}

\begin{algorithm}[!t]
    \small
	\SetCommentSty{textsf}
	\DontPrintSemicolon 
	\SetKwInOut{Input}{Input}
	\SetKwInOut{Output}{Output}
	\SetKwRepeat{Do}{do}{while}
	\SetKwFunction{rangeSearch}{\textsc{booleanRangeSearch}}
	\SetKwFunction{ins}{\textsc{insert}}
        \SetKwFunction{lowerrange}{\textsc{SearchRange}}
        \SetKwFunction{invertrange}{\textsc{InvertRange}}
	\SetKwProg{myalg}{procedure}{}{}
	\SetKwData{tree}{\textsc{OrthogonalRangeSearch}}
	\SetKwData{op}{$\mathsf{op}$}
        \SetKwData{nonEqCount}{$\mathsf{k}$}
	\SetKwData{col}{$\mathsf{col}$}
	\Input{Relation $\R{R}$, Homogeneous DC $\varphi$}
        \Output{True/False}
	\SetKwProg{myproc}{\textsc{procedure}}{}{}
	\SetKwData{ret}{\textbf{return}}
        $\nonEqCount \leftarrow |\vars(\varphi) \setminus \vars_{=}(\varphi)|$ \;\label{line:non-eq-count}
        $H \leftarrow $ empty hash table \;
	\ForEach{$t \in R$}{
            $v \leftarrow \pi_{\vars_{=}(\varphi)} (t)$ \;\label{line:4}
            \If{$v \not\in H$}{ \label{line:init-begin}
                \If{$\nonEqCount \neq 0$}{
                    $H[v] \leftarrow \text{new } \tree()$ \label{line:5}
                } \Else {
                    $H[v] \leftarrow 0$ \label{line:init-end}
                }
            }
            \If{$\nonEqCount \neq 0$}{ \label{line:contains-inequality}
                    $\bX, \bY \leftarrow \lowerrange{t}$ \; \label{line:check-begin}
                    $\bX', \bY' \leftarrow \invertrange(\bX, \bY)$ \;
                    \tcc{$\op_1$ and $\op_2$ for \rangeSearch\ are chosen based on $\op$ in the inequality predicates}
            	\If{$H[v].\rangeSearch(\bX, \bY) \lor         
                    H[v].\rangeSearch(\bX', \bY')$ \label{line:search}}{
                        \ret \textbf{false} \label{line:return-false-inequality}
                   }
                   $H[v].\ins(\pi_{\vars(\varphi) \setminus \vars_{=}(\varphi)}(t))$ \label{line:insert}
            } \Else {
                $H[v] \leftarrow H[v] + 1$ \label{increment}\;
                \If{$H[v] > 1$}{
                    \ret \textbf{false} \label{line:end}
                }
            }
            
        }
        \ret \textbf{true} \; \label{line:return-true}
        \myproc{\lowerrange{t}}{
            $\bX \leftarrow (-\infty, \dots, -\infty), \bY \leftarrow (\infty, \dots, \infty)$ \tcc*{$\bX$ and $\bY$ are indexed by the non-equality predicates $p_i$. Both are of size $\nonEqCount$}
            \ForEach{predicate $p_i \in \text{non-equality predicates in $\varphi$}$}{
                \If{$p_i.\op$ is $<$ or $\leq$}{
                    $\bY_i \leftarrow \min \{\bY_i, \pi_{p_i.\col}(t)\}$
                }
                \If{$p_i.\op$ is $>$ or $\geq$}{
                    $\bX_i \leftarrow \max \{\bX_i, \pi_{p_i.\col}(t)\}$
                }
            }
            \ret $\bX, \bY$
        }
        \myproc{\invertrange{$\bX, \bY$}}{
            $\bY' \leftarrow \bX, \bX' \leftarrow \bY$ \;
            flip $-\infty$ to $\infty$ and $\infty$ to $-\infty$ in $\bY'$ and $\bX'$ respectively. \;
            % \ForEach{$i \in \{1, 2, \dots, k\}$}{
            %     \If{$\bY'_i = -\infty $}{
            %         $\bY'_i \leftarrow \infty$
            %     }
            %     \If{$\bX'_i = \infty$}{
            %         $\bX'_i \leftarrow -\infty$
            %     }
            % }
            \ret $\bX', \bY'$
        }
	\caption{{\sc DC verification}}
	\label{algo:main}
\end{algorithm}

\subsection{Verification Algorithm}

In this section we present our verification algorithm that leverages prior work on orthogonal range search. Without loss of generality, we will assume that all predicates of the DC contain only equalities and inequalities but no disequality and that the DC is homogeneous. Both of these assumptions will be removed later. Finally, we assume that the categorical columns in \R{R}\ have been dictionary-encoded to integers, a standard assumption in line with prior work~\cite{pena2019discovery, pena2020efficient}.

\eat{\fp{What is the purpose of lines 4-9? and how it connects with the differences between equalities and range searches?}

\fp{Also add line ranges instead of individual lines}}

Algorithm~\ref{algo:main} describes the details for verifying a homogeneous DC $\varphi$ over a relation \R{R}. On Line~\ref{line:non-eq-count}, we compute the number $k$ of columns that appear in non-equality predicates in $\varphi$. If $\varphi$ contains only equality in all the predicates, then $k=0$. For each tuple $\tup{t}$ in \R{R}, we first project $\tup{t}$ on all columns that participate in an equality predicate (Line~\ref{line:4}) to get $v$. If the projection $v$ has not been seen before, then we insert $v$ in the hash table $H$ and initialize $H[v]$ (Lines~\ref{line:init-begin}-\ref{line:init-end}). Next, we process the projected tuple $v$ based on whether the DC contains only equality predicates or not (Lines~\ref{line:contains-inequality}-\ref{line:end}). If the DC only contains an equality operator in all the predicates, it is sufficient to check if there exist two tuples whose projection over $\vars_{=}(\varphi)$ is equal which would constitute a violation. This is done by storing the count in a hash map which is incremented (Lines~\ref{increment}-\ref{line:end}). If the DC contains a predicate with inequalities, we build a range search data structure of dimension $k$. The $k$ dimensional point inserted into the tree is the tuple obtained by projecting $\tup{t}$ on all non-equality columns (Line~\ref{line:insert}). Before we insert, we check that the new point would not satisfy all the inequality predicates (i.e., form a violation) when grouped with any previously processed point (Lines~\ref{line:check-begin}-\ref{line:insert}). Next, we give an example of how the algorithm works. Figure~\ref{fig:tikz} helps visualizing the ideas behind the example.

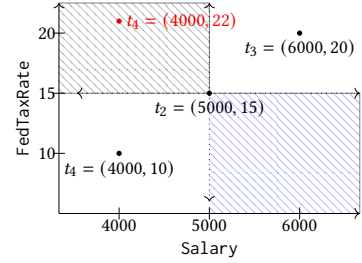
\begin{figure}[!htp]
\scalebox{0.8}{
\begin{tikzpicture}
\draw[->,] (0,0)--(5,0);
\draw[->] (0,0)--(0,3.5);
\node[rotate=90] at (-0.6,1.75) {\texttt{FedTaxRate}};
\node[rotate=0] at (2.5, -0.6) {\texttt{Salary}};

\draw (1,0.1) -- (1,-0.1);
\node at (1,-0.2) {$4000$};
\draw (2.5,0.1) -- (2.5,-0.1);
\node at (2.5,-0.2) {$5000$};
\draw (4,0.1) -- (4,-0.1);
\node at (4,-0.2) {$6000$};

\draw (-0.1,1) -- (0.1,1);
\node at (-0.2,1) {$10$};
\draw (-0.1,2) -- (0.1,2);
\node at (-0.2,2) {$15$};
\draw (-0.1,3) -- (0.1,3);
\node at (-0.2,3) {$20$};

\draw[red, fill=red] (1, 3.2) circle[radius=1pt] node[right] {\color{red} $\tup{t}_4 = (4000, 22)$};
\draw[fill=black] (1, 1) circle[radius=1pt, fill=black] node[below] {$\tup{t}_4 = (4000, 10)$};
\draw[fill=black] (2.5, 2) circle[radius=1pt, fill=black] node[below] {$\tup{t}_2 = (5000, 15)$};
\draw[fill=black] (4, 3) circle[radius=1pt, fill=black] node[below] {$\tup{t}_3 = (6000, 20)$};

% \draw[->,dotted] (1,1)--(1,3.5);
% \draw[->,dotted] (1,1)--(0.3,1);
% \draw[->,dotted] (1,1)--(5,1);
% \draw[->,dotted] (1,1)--(1,0.2);

\draw[->,dotted] (2.5,2)--(2.5,3.5);
\draw[->,dotted] (2.5,2)--(0.3,2);
\draw[->,dotted] (2.5,2)--(2.5, 0.2);
\draw[->,dotted] (2.5,2)--(5,2);

\draw[fill=gray!80, pattern=north west lines, opacity=0.5] (0,2) -- (2.5,2) -- (2.5,3.5) -- (0,3.5);

\draw[fill=red!80, pattern=north west lines, opacity=0.5, pattern color=blue] (2.5,2) -- (5,2) -- (5, 0) -- (2.5, 0);

% \draw[->,dotted] (4,3)--(4,3.5);
% \draw[->,dotted] (4,3)--(0.3,3);
\end{tikzpicture}}
\vspace{-4mm}
\caption{\texttt{Salary} and \texttt{FedTaxRate} for each tuple in \texttt{Tax}. The grey (upper left quadrant centered at $t_2$) and blue shaded areas (lower right quadrant centered at $t_2$) show the regions where the tuples that could form a violation $t_2$ lie.}
\vspace{-2mm}
\label{fig:tikz}
\end{figure}

\begin{example} \label{ex:5}
    Consider the \texttt{Tax} table from our running example and the DC $\varphi_3: (\proj{s}{\mathtt{State}} = \proj{t}{\mathtt{State}} \land \proj{s}{\mathtt{Salary}} <  \proj{t}{\mathtt{Salary}} \land \proj{s}{\mathtt{FedTaxRate}} >  \proj{t}{\mathtt{FedTaxRate}})$ which contains one equality and two inequality predicates. Algorithm~\ref{algo:main} will first start with the equality predicate, and place $\tup{t}_1$ in a hash partition by hashing $\tup{t}_1.\mathtt{State} = \text{New York}$. Since the range tree for the hash bucket is empty, the range search will return false and we insert $(\tup{t}_1.\mathtt{Salary},$ $\tup{t}_1.\mathtt{FedTaxRate})$ in the tree. Next, we process $\tup{t}_2$ which is placed in a different partition since $\tup{t}_2.\mathtt{State} = \text{Wisconsin}$. The algorithm performs a range search which returns false since the tree corresponding to that partition is empty. We then insert $(5000, 15)$ in the tree. When $\tup{t}_3$ is processed, it is placed in the same partition as $\tup{t}_2$ since they have the same $\mathtt{State}$ value. {At this point, we have two tuples in the same partition and thus we need to consider the remaining predicates in the DC to establish whether there is a violation. Such a violation would occur in two scenarios: 1) if the tuple already present in the tree ($\tup{t}_2$) has lower \texttt{Salary} than $6000$ but larger than $20$ \texttt{FedTaxRate} which are the corresponding values of \texttt{Salary} and \texttt{FedTaxRate} for tuple $\tup{t}_{3}$, or 2) if $\tup{t}_2$ has salary higher than $6000$ but a \texttt{FedTaxRate} smaller than $20$. To identify whether this is the case, we  perform an orthogonal range search with $\bX = (-\infty, 20)$ and $\bY = (6000, \infty)$ (scenario 1). \eat{In other words, we wish to find if there exists a tuple already present in the tree, such that the first coordinate (i.e. \texttt{Salary}) is smaller than $6000$ but the second coordinate (i.e. \texttt{FedTaxRate}) is larger than $20$. However, it could be the case that there is a tuple in the tree corresponding to a larger salary than $6000$ but a smaller \texttt{FedTaxRate} than $20$.} Then, we also search in the \emph{inverted} range $\bX' = (6000, -\infty)$ and $\bY' = (\infty, 20)$ (scenario 2).} Since $\tup{t}_2$ does not lie in the desired range, both range searches return false and we insert $(6000, 20)$ in the tree. Finally, $\tup{t}_4$ is processed and placed in the same partition as $\tup{t}_2$ and $\tup{t}_3$. We thus query the tree with $\bX = (-\infty, 10)$ and $\bY = (4000, \infty)$ (and the inverted range $\bX' = (4000, -\infty), \bY' = (\infty, 10)$) but no point satisfies the criteria as shown in Figure~\ref{fig:tikz}. Both searches return false, we insert $\tup{t}_4$ in the tree, and return true (Line~\ref{line:return-true}). 

    %\smallskip
    %\noindent 
    To demonstrate an example of a violation, consider Table~\texttt{Tax'} with the modified tuple $\tup{t}_4$ with $\mathtt{FedTaxRate} = {\mathbf{22}}$ (shown as $\tup{t}_4$ in red in Figure~\ref{fig:tikz}). Then, the range search queries would be  $\bX = (-\infty, 22), \bY = (4000, \infty)$ and $\bX' = (4000, -\infty), \bY' = (\infty, 22)$. Then, $\tup{t}_2$ and $\tup{t}_3$ form a violation with $\tup{t}_4$ since both the points represent a higher salary than $4000$ but a smaller tax rate than $22$, and Line~\ref{line:return-false-inequality} returns false.\qed
\end{example}

We now establish the correctness of Algorithm~\ref{algo:main}.

\begin{lemma}
    Algorithm~\ref{algo:main} correctly determines whether a homogeneous DC $\varphi$ is satisfied.
\end{lemma}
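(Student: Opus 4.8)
The plan is to show that Algorithm~\ref{algo:main} returns \textbf{false} exactly when $\varphi$ has a violation, i.e., an ordered pair $(s,t)$ with $s\neq t$ satisfying every predicate. First I would establish the governing invariant: since each predicate of a homogeneous $\varphi$ has the form $\proj{s}{\A{A}}\ \op\ \proj{t}{\A{A}}$, any violating pair must agree on all equality columns, so both tuples fall in the same hash bucket $H[v]$ with $v=\pi_{\vars_{=}(\varphi)}(\cdot)$. Hence it suffices to search for violations \emph{within} each bucket, which is precisely what the partitioning on Line~\ref{line:4} achieves. I would then dispatch the degenerate case $k=0$ (equality-only DC) separately: here a violation is just two distinct rows with identical projection $v$, and the counter $H[v]$ detects this exactly when $H[v]>1$, consistent with bag semantics.

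For the main case $k\neq 0$, I would first verify that \textsc{SearchRange} and \textsc{InvertRange} encode the two possible orderings of a candidate violating pair. Reading off the construction, the box $(\bX,\bY)$ built for the current tuple $r$ contains precisely the previously inserted points $r'$ for which $(r',r)$ satisfies every non-equality predicate (treating $r'$ as $s$ and $r$ as $t$): a $<$/$\leq$ predicate on a column caps $\bY_i$ at $\proj{r}{\A{A}}$ while a $>$/$\geq$ predicate raises $\bX_i$, and the matching choice of $\op_1,\op_2$ makes each bound strict or non-strict as required. The inverted box $(\bX',\bY')$ swaps and complements these bounds, and a short check shows it contains exactly the $r'$ for which $(r,r')$ is a violation (treating $r$ as $s$). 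Thus the disjunction of the two \textsc{booleanRangeSearch} calls on Line~\ref{line:search} returns true iff $r$ forms a violation, in either order, with some earlier tuple of the same bucket.

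With this geometric dictionary in hand, soundness is immediate: if Line~\ref{line:return-false-inequality} fires while processing $r$, the witnessing point $r'$ lies in the same bucket (so all equality predicates hold), was inserted in a strictly earlier iteration (so $r'\neq r$ as rows), and satisfies all non-equality predicates in the corresponding order, hence a genuine violation exists. For completeness I would argue the contrapositive: assume the algorithm reaches Line~\ref{line:return-true} and suppose toward contradiction that $(s,t)$ is a violation. Both $s,t$ lie in one bucket; let $r_1$ be whichever is processed first. Since no earlier call returned false, $r_1$ was actually inserted into $H[v]$ on Line~\ref{line:insert}. When the later tuple $r_2$ is processed, $r_1$ is present in the tree, and by the encoding fact the appropriate range search (the forward box if $(r_1,r_2)=(s,t)$, the inverted box if $(r_2,r_1)=(s,t)$) must return true, forcing a \textbf{false} return and contradicting that the loop ran to completion.

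The only genuinely delicate step is the second paragraph: making the correspondence between the range boxes and the predicate orderings airtight, in particular handling the strict-versus-non-strict boundary through $\op_1,\op_2$ and confirming that the WLOG assumption ``each column appears in at most one predicate'' makes the $\min/\max$ updates in \textsc{SearchRange} assign each dimension exactly once, so the box is well defined and captures the intended conjunction rather than a collapsed range. Everything else, namely the bucketing invariant, the $k=0$ subcase, and the first-versus-second-processed case analysis, is routine once this dictionary is fixed.
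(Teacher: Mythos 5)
Your proposal is correct and follows essentially the same route as the paper's proof: separate the equality-only case (handled by the bucket counters), reduce the general case to within-bucket search via the equality-column partition, show that the forward and inverted range queries characterize violations in the two possible orderings, and conclude completeness by noting that for any violating pair the earlier-processed tuple is already in the tree when the later one is queried. The only difference is cosmetic: the paper proves the all-$<$ case via a coordinate-wise strict-dominance claim and declares the other operators ``similar,'' whereas you argue the mixed-operator case directly from the \textsc{SearchRange}/\textsc{InvertRange} construction (correctly flagging the $\op_1,\op_2$ strictness bookkeeping and the one-predicate-per-column assumption), which is, if anything, slightly more complete.
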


\eat{\fp{maybe use $t_1, t_2$ instead of $s, t$ here, or vice versa. it becomes confusing when alternating.}}

\begin{proof}    We first show that Algorithm~\ref{algo:main} is correct when $\varphi$ only contains equality predicates. In this case, it is sufficient to determine whether there exist two distinct tuples $\tup{t}_1$ and $\tup{t}_2$ such that $\pi_{\vars_{=}(\varphi)}(\tup{t}_1) = \pi_{\vars_{=}(\varphi)}(\tup{t}_2)$. The hash table $H$ stores a counter for each distinct $\pi_{\vars_{=}(\varphi)}(\tup{t})$ and increments it for each tuple $\tup{t} \in R$ (line \ref{increment}). Thus, the algorithm will correctly return false as soon as some counter becomes greater than one and return true only if no such $t_1, t_2$ exists.

Next, we consider the case when there exists at least one predicate with inequality. We show the proof for the case when all inequality predicate operators are $<$, i.e., all predicates in the DC are of the form $(\tup{s}.A = \tup{t}.A)$ or $(\tup{s}.A < \tup{t}.A)$. The proof for other operators is similar. We first state the following claim.

    \begin{claim} \label{claim}
        Let ${w}$ be the set of attributes that appear in the predicates with inequalities. Two tuples $\tup{t}_1$ and $\tup{t}_2$ in the same partition can form a violation if and only if $\tup{t}_1({w}) \prec \tup{t}_2({w})$ or $\tup{t}_2({w}) \prec \tup{t}_1({w})$, where the notation $\tup{t}({w})$ denotes the projection, $\pi_{{w}}(\tup{t})$, of tuple $t$ on attributes $w$.
    \end{claim}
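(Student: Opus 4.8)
The plan is to unfold the definition of a violation and read off the product order $\prec$ directly; the statement is essentially a restatement of the definition of a violation once the equality columns are factored out by the hashing. Recall that a violation is an ordered pair of distinct tuples $(\tup{s},\tup{t})$ for which every predicate of $\varphi$ holds. In the all-$<$ case under consideration the predicates are of exactly two kinds: equality predicates $\proj{s}{\A{A}} = \proj{t}{\A{A}}$ for $\A{A} \in \vars_{=}(\varphi)$, and inequality predicates $\proj{s}{\A{A}} < \proj{t}{\A{A}}$ for $\A{A} \in {w}$. By the standing assumption that each column participates in at most one predicate, the sets $\vars_{=}(\varphi)$ and ${w}$ are disjoint, so the equality predicates depend only on $\pi_{\vars_{=}(\varphi)}$ and the inequality predicates only on $\pi_{{w}}$.

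First I would invoke the hypothesis that $\tup{t}_1$ and $\tup{t}_2$ lie in the same partition, which by construction of $H$ means $\pi_{\vars_{=}(\varphi)}(\tup{t}_1) = \pi_{\vars_{=}(\varphi)}(\tup{t}_2)$. Hence every equality predicate is satisfied for both orderings $(\tup{t}_1,\tup{t}_2)$ and $(\tup{t}_2,\tup{t}_1)$, and whether either ordered pair is a violation is decided solely by the inequality predicates on ${w}$.

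Writing $\tup{t}(w) \prec \tup{t}'(w)$ to mean $\proj{t}{\A{A}} < \proj{t'}{\A{A}}$ for every $\A{A} \in {w}$, I would then split on the two orientations of the unordered pair. The ordered pair $(\tup{t}_1,\tup{t}_2)$ is a violation exactly when $\proj{t_1}{\A{A}} < \proj{t_2}{\A{A}}$ holds simultaneously for all $\A{A} \in {w}$, i.e.\ when $\tup{t}_1(w) \prec \tup{t}_2(w)$; symmetrically $(\tup{t}_2,\tup{t}_1)$ is a violation iff $\tup{t}_2(w) \prec \tup{t}_1(w)$. Since ``$\tup{t}_1$ and $\tup{t}_2$ can form a violation'' means that at least one orientation is a violation, taking the disjunction of the two equivalences yields precisely $\tup{t}_1(w) \prec \tup{t}_2(w) \lor \tup{t}_2(w) \prec \tup{t}_1(w)$, as claimed.

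There is no deep obstacle here: the only care needed is to keep the two orderings of the pair straight and to note that equal ${w}$-projections cannot yield a violation under strict inequalities. I would close by remarking that the general operator case is handled by the same case split, since replacing a $<$ by $>$, $\leq$, or $\geq$ on a coordinate merely changes the orientation or strictness of the product order used on that coordinate, leaving the argument otherwise unchanged. This is also exactly why Algorithm~\ref{algo:main} issues both a range search and an \emph{inverted} range search: the two queries correspond to the two orientations appearing in the disjunction.
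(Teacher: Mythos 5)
Your proof is correct and matches the paper's approach: the paper disposes of this claim in a single sentence (``follows directly from the semantics of the operator under consideration and the definition of a violation''), and your writeup simply supplies the routine unfolding — equality predicates are settled by the partition, so each orientation of the pair violates iff the corresponding strict coordinate-wise dominance holds on $w$. Your closing remarks on the disjointness of $\vars_{=}(\varphi)$ and $w$, on non-strict operators, and on the correspondence with the forward and inverted range searches are all consistent with what the paper asserts elsewhere.
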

    
    Here, $\prec$ is the standard coordinate-wise strict dominance checking operator. Claim~\ref{claim} follows directly from the semantics of the operator under consideration and the definition of a violation. Suppose $t$ is the tuple being inserted in the tree. Line~\ref{line:search} will query the range tree with $\bX = (-\infty, \dots, -\infty), \bY = (\tup{t}({v_1}), \dots, \tup{t}({v_k}))$ and $\bX' = (\tup{t}({v_1}), \dots, \tup{t}({v_k})),$ $\bY' = (\infty, \dots, \infty)$. In other words, the algorithm searches for a point in the tree such that $t$ is strictly smaller or larger for each of the $k$ coordinates. The existence of such a point would imply that there exists a pair that forms a violation. 

    If the orthogonal range search finds no point, Claim~\ref{claim} tells us that $\tup{t}$ cannot form a violation with any tuple $\tup{s}$ already present in the range tree. In each iteration of the loop, we insert one tuple into the range tree. Therefore, if $\tup{t}_1$ and $\tup{t}_2$ form a violation, it will be discovered when one of them (say $\tup{t}_2$) is already inserted in the range tree and $\tup{t}_1$ is being processed by the for loop. This completes the proof.
\end{proof}

\introparagraph{Time and Space Complexity} We next establish the running time of the algorithm. First, observe that if $k = 0$, then the algorithm takes $O(|\R{R}|)$ time since the for loop only performs a constant number of hash table operations. If $k \geq 1$, the algorithm performs one insertion and two Boolean orthogonal range search queries in each iteration of the for loop. Suppose the insertion time complexity, denoted by $I(n)$, is of the form\footnote{Throughout the paper, we use $\log^k N$ to mean $(\log N)^k$ and not iterated logarithms.} $\log^\alpha n$ and search time complexity is $T(n)$ when the data structure has $n$ points in it. The running time can be bounded as 

    \begin{align*}
        &\sum \limits_{i = 1}^{|\R{R}|} \big( \underbrace{\log^\alpha i}_{\text{insertion time}} + \underbrace{2 \cdot T(i)}_{\text{query time}} \big) \\
        &< \int_1^{|\R{R}|+1} \log^{\alpha}i\,di + \int_1^{|\R{R}|+1}  2 \cdot T(i) \,di\\
        &= O(|\R{R}| \cdot \log^{\alpha} |\R{R}|) + \int_1^{|\R{R}|+1}  2 \cdot T(i) \,di
    \end{align*}

Seminal work by Overmars~\cite{overmars1983design} showed that using range trees and $k$-d trees, one can design an algorithm with the parameters as shown in~\autoref{table:param}.

\begin{table}
\small
\caption{Data structure parameter on input of size $n$~\cite{overmars1983design}. $k$ is the number of dimensions of the points inserted in the tree.} \label{table:param}
 \begin{tabularx}{\linewidth}{ l l l l} 
 \toprule
 DS & Insertion $I(n)$ & Answering $T(n)$ & Space $S(n)$ \\
 \midrule
Range tree & $O(\log^k n)$ & $O(\log^{k} n)$  &  $O(n \cdot \log^{k-1} n)$ \\
$k$-d tree & $O(\log n)$ & $O(n^{1 - \frac{1}{k}})$  &  $O(n)$ \\

\bottomrule
 \end{tabularx}
 \end{table}

The integral in the second term in the equation above can be bounded by setting $T(i) = \log^{k} i$ or $T(i) = i^{1-1/k}$. In both cases, the second term evaluates to $O(|\R{R}| \cdot T(|\R{R}|))$. For space usage, note that the hash table takes a linear amount of space in the worst case. Thus, the space requirement of the tree data structure determines the space complexity. The main result can be stated as follows.

\begin{theorem} \label{thm:main}
    Algorithm~\ref{algo:main} runs in time $O(|\R{R}| \cdot (I(|\R{R}|) + T(|\R{R}|)))$ and uses space $S(|\R{R}|)$ when using range tree or $k$-d tree with parameters as shown in~\autoref{table:param}.
\end{theorem}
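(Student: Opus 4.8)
The plan is to formalize the summation-to-integral bound already sketched immediately before the statement, handling the two regimes $k=0$ and $k\geq 1$ separately, and then to bound the space by summing the per-partition structure sizes. First I would dispose of $k=0$: here each of the $|\R{R}|$ loop iterations performs only a constant number of hash-table lookups and a counter increment, so the total time is $O(|\R{R}|)$ and the only storage is the $O(|\R{R}|)$ counters, consistent with $S(|\R{R}|)$ for either data structure. The substantive case is $k\geq 1$.

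For $k\geq 1$, the key accounting observation is that each iteration performs exactly one \textsf{insert} and two \textsf{booleanRangeSearch} calls on the single partition structure $H[v]$ into which the current tuple falls, so if that structure currently holds $i$ points the iteration costs at most $I(i)+2\,T(i)$. Since $I$ and $T$ are nondecreasing, concentrating all tuples in one partition is the worst case: it forces the structure to grow through sizes $0,1,\dots,|\R{R}|-1$, which upper-bounds the total work by $\sum_{i=1}^{|\R{R}|}\bigl(\log^\alpha i + 2\,T(i)\bigr)$. Bounding this monotone sum by the integral $\int_1^{|\R{R}|+1}$ of its terms (valid because the summands are increasing in $i$) reproduces the displayed inequality, and it remains to evaluate the two integrals. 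Integration by parts gives $\int_1^{|\R{R}|+1}\log^\alpha i\,di = O(|\R{R}|\log^\alpha|\R{R}|) = O(|\R{R}|\cdot I(|\R{R}|))$. For the second integral I substitute the two rows of \autoref{table:param}: with $T(i)=\log^k i$ it is again $O(|\R{R}|\log^k|\R{R}|)$, and with $T(i)=i^{1-1/k}$ it evaluates to
\[
\frac{(|\R{R}|+1)^{2-1/k}-1}{2-1/k} = O\!\left(|\R{R}|^{\,2-1/k}\right) = O\!\left(|\R{R}|\cdot|\R{R}|^{\,1-1/k}\right),
\]
so in both cases the second term is $O(|\R{R}|\cdot T(|\R{R}|))$. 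Adding the two contributions yields the claimed time bound $O(|\R{R}|\cdot(I(|\R{R}|)+T(|\R{R}|)))$.

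For space, the hash table itself stores at most $|\R{R}|$ entries, hence $O(|\R{R}|)$, so the range-search structures dominate. Letting the partitions have final sizes $n_1,\dots,n_m$ with $\sum_j n_j = |\R{R}|$, the total structure space is $\sum_j S(n_j)$, and I would argue this is $O(S(|\R{R}|))$ by the superadditivity of $S$: for the $k$-d tree $S(n)=O(n)$ is additive, so the sum is $O(|\R{R}|)$, while for the range tree $\sum_j n_j\log^{k-1}n_j \le \bigl(\sum_j n_j\bigr)\log^{k-1}\bigl(\sum_j n_j\bigr) = |\R{R}|\log^{k-1}|\R{R}|$. The only steps needing genuine care, rather than routine calculation, are justifying that the single-partition configuration is the worst case for the time bound (which follows from monotonicity of $I$ and $T$) and the superadditivity argument that aggregates space across partitions into a single $S(|\R{R}|)$; I expect the latter to be the easiest point to state imprecisely, since it is tempting to bound each $S(n_j)$ by $S(|\R{R}|)$ and lose a factor of $m$.
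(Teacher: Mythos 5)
Your proposal is correct and takes essentially the same route as the paper: the $k=0$ case is dispatched by constant-time hash operations, and for $k\geq 1$ the paper uses exactly your summation-to-integral bound $\sum_{i}\bigl(\log^\alpha i + 2\,T(i)\bigr) < \int_1^{N+1}\log^\alpha i\,di + \int_1^{N+1} 2\,T(i)\,di$ with the parameters of \autoref{table:param} substituted in. The two points you flag as needing care---monotonicity of $I$ and $T$ making the single-partition configuration the worst case, and aggregating per-partition space via superadditivity of $S$ rather than a lossy per-partition bound $S(n_j)\leq S(|\R{R}|)$---are both correct and are precisely the details the paper's proof leaves implicit.
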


With range trees, the running time is $O(|\R{R}| \cdot \log^k |\R{R}|)$ and space usage is $O(|\R{R}| \log^{k-1} |\R{R}|)$; for $k$-d trees, the running time is $O(|\R{R}|^{2 - \frac{1}{k}})$ and space requirement is $O(|\R{R}|)$.

%\smallskip
%\noindent 
\introparagraph{Comparison with \textsc{Facet}} Our approach is superior to \textsc{Facet} in three respects. First, we use polynomially less space and time in the worst-case. \eat{Second, even on instances that are the worst-case, the performance can vary significantly as shown below.}
Second, there exist instances where our algorithm saves a significant amount of time and space by early termination.

\begin{proposition}
    For every homogeneous DC $\varphi$ with at least one non-equality predicate, there exists a relation $\R{R}$ such that Algorithm~\ref{algo:main} takes $O(1)$ time and \textsc{Facet} requires $\Omega(|\R{R}|)$ time.
\end{proposition}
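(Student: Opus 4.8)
The plan is to exhibit, for each such $\varphi$, a relation on which a violation is witnessed by the very first two tuples, so that Algorithm~\ref{algo:main} halts after touching a constant number of rows, whereas \facet\ is forced to scan the whole relation before it can report anything. By the reduction discussed in Section~\ref{sec:verification}, I may assume the non-equality predicate is an inequality, so that $k = |\vars(\varphi) \setminus \vars_{=}(\varphi)| \geq 1$. Writing the predicates of $\varphi$ as $p_i : \proj{s}{\A{A}_i}\ \op_i\ \proj{t}{\A{A}_i}$, I choose values for two tuples $\tup{t}_1, \tup{t}_2$ so that every predicate is satisfied by the ordered pair $(\tup{t}_1, \tup{t}_2)$: equal values on the equality columns, and $\proj{t_1}{\A{A}_i}\ \op_i\ \proj{t_2}{\A{A}_i}$ chosen to hold for each inequality (e.g.\ $\proj{t_1}{\A{A}_i} < \proj{t_2}{\A{A}_i}$ when $\op_i$ is $<$). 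Because each column participates in at most one predicate of a homogeneous DC, these choices never conflict, so such a violating pair always exists. I then place $\tup{t}_1, \tup{t}_2$ as the first two rows of $\R{R}$ and pad with $|\R{R}| - 2$ arbitrary further rows to reach the desired cardinality.

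Next I would bound Algorithm~\ref{algo:main} from above. Since $k \geq 1$, the algorithm takes the range-search branch (Line~\ref{line:contains-inequality}). Processing $\tup{t}_1$ finds an empty data structure, so both Boolean searches on Line~\ref{line:search} return false, $\tup{t}_1$ is inserted, and this costs $O(1)$. Processing $\tup{t}_2$ issues a range query whose range is exactly the coordinate-wise dominance region certifying a violation (by Claim~\ref{claim}); since $\tup{t}_1$ lies inside it and is already in the structure, the search on Line~\ref{line:search} succeeds and the algorithm returns \textbf{false} on Line~\ref{line:return-false-inequality}. Only two rows are examined, and every data-structure operation runs on a tree of size at most one, so the total cost is $O(1)$; the remaining $|\R{R}|-2$ rows are never inspected.

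Finally I would lower-bound \facet. The essential point is structural rather than instance-specific: \facet\ proceeds predicate-by-predicate through refinement rather than tuple-by-tuple, and refining even the first predicate (hashing for an equality, sorting for an inequality) requires reading every row of $\R{R}$ — at least one full pass, which is precisely the third source of inefficiency identified in Section~\ref{sec:limitation}. Hence \facet\ spends $\Omega(|\R{R}|)$ time before it can emit any answer, independent of where the violation sits, giving the claimed separation. I expect this last step to be the only delicate one: making the $\Omega(|\R{R}|)$ bound rigorous means arguing that \facet's first refinement genuinely must read every tuple (its output for the first predicate is a function of all rows and is materialized before the second predicate is considered), rather than merely asserting it. The construction and the $O(1)$ bound for Algorithm~\ref{algo:main} are routine by comparison.
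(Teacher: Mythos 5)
Your proposal is correct and follows essentially the same route as the paper's proof: place a violating pair at the head of the relation so that Algorithm~\ref{algo:main} terminates in $O(1)$ time after two tuples, while \facet's first refinement (hashing or sorting over the full relation) forces an $\Omega(|\R{R}|)$ pass. The only difference is one of presentation --- the paper sketches the argument for the single representative DC $\neg(\tup{s}.A = \tup{t}.A \land \tup{s}.B < \tup{t}.B)$ and asserts the extension is straightforward, whereas you carry out the general construction (including the disequality-to-inequality reduction) explicitly, which is a welcome tightening rather than a new idea.
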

\begin{proof}
    We sketch the proof for $\varphi : \forall \tup{s}, \tup{t} \in \R{R},\ \neg (\tup{s}.A = \tup{t}.A \land \tup{s}.B < \tup{t}.B)$ which can be extended straightforwardly for other DCs of interest. We construct a unary relation $\R{R}(A, B)$ of size $N$ as follows: the first tuple $t_1$ is $(a_1, b_1)$ and the remaining $N-1$ tuples are $(a_1, b_2)$ where $b_1 < b_2$. Note that $t_1$ forms a violation with every other tuple in the relation. Algorithm~\ref{algo:main} initializes one range tree when processing $t_1$ (Line~\ref{line:5}) and inserts $t_1$ in it (Line~\ref{line:insert}). %for all the tuples since there is no equality predicate. 
    Thereafter when tuple $t_2$ is processed, the range search query (Line~\ref{line:search}) will return true and the algorithm will terminate. Note that all the operations take $O(1)$ time since the tree only contains two tuples. However, \textsc{Facet} requires $\Omega(|\R{R}|)$ time for processing the refinement of $\tup{s}.A = \tup{t}.A$ predicate.
\end{proof}

\smallskip
\noindent Lastly, the space requirement of \textsc{Facet} is relation dependent. If the machine has only linear amount of memory, \textsc{Facet} will be unable to complete the refinements and fail. On the other hand, our solution allows verification with linear space using $k$-d trees. This flexibility is important for resource-constrained production scenarios.

\subsection{Generalizations and Optimizations} \label{sec:opt}

In the previous section, we made some assumptions on the type of constraints processed by Algorithm~\ref{algo:main}. We now gradually remove these assumptions and present appropriate examples and proofs.

\smallskip
\introparagraph{Allowing inequality heterogeneous predicates} We first extend our algorithm to also handle 
heterogeneous predicates, namely predicates of the form $\proj{s}{\A{A}}\ \op \ \proj{t}{\A{B}}$, where
$\op$ is $<, \leq, >$ or $\geq$.
Let $\varphi$ be a DC containing row-level homogeneous predicates (as before) and some inequality
heterogeneous predicates.
The main difference from the previous case is that we now need to generalize our procedure for
computing the ranges $(\bX,\bY)$ and inverted ranges $(\bX',\bY')$ for range search. 
%If $S$ is the set of tuples that have been processed, and $t$ is a new tuple being processed,
Algorithm~\ref{algo:rangesearch} shows the generalization. The main idea is that
if $\varphi$ has a predicate $s.C < t.D$, then when we process a new tuple $r$, the upper-bound
for {\em{attribute $C$}} is set to $r.D$ in the forward check, and 
the lower-bound for attribute $D$ is set to $r.C$ in the inverted check (because we are comparing
attribute $C$ of $s$ with attribute $D$ of $t$ in the predicate). When $C = D$, we recover our original
algorithm. We also note that the new generalization also extends our algorithm to handle the case when 
attributes occur in more than one predicate.  Thus, a heterogeneous equality, $s.C = t.D$, can be handled
by rewriting it to $s.C \leq t.D \wedge s.C \geq t.D$ and using the generalized range computation from
Algorithm~\ref{algo:rangesearch}. The range trees store projections of tuples on the attributes that are
involved in inequality predicates.

Heterogeneity also enables an optimization. Let $L_1$ be all the attributes present in inequality predicates
and referenced by $s$, and $L_2$ be those that are referenced by $t$. For example, if $s.C < t.D$ is a predicate
in $\varphi$, then $L_1$ will contain $C$ and $L_2$ will contain $D$. Now, rather than having one range search
data structure of dimension $|L_1\cup L_2|$, we can instead have two potentially smaller range search
data structures of dimension $|L_1|$ and $|L_2|$ -- one to perform the forward search and the other
to perform the inverted search. In absence of heterogeneous predicates, we had $L_1=L_2$ and both these were
identical, but in presence of heterogeneous constraints, $L_1$ and $L_2$ can each be strictly smaller 
than their union.

\begin{example}
    Consider the DC $\varphi: \neg (\proj{s}{\mathtt{Salary}}\ \leq \ \proj{t}{\mathtt{FedTaxRate}})$. Note that $L_1 = \{\mathtt{Salary}\}$ and $L_2 = \{\mathtt{FedTaxRate}\}$. Suppose we are processing tuple $r \in 
    \R{R}$. We will create two range search data structures $H_1$ (in which we will store $t.\mathtt{Salary}$) and $H_2$ (in which we will store $t.\mathtt{FedTaxRate}$). Given $r$, we first do a range search in $H_1$ to check if there is a point that is no larger than $r.\mathtt{FedTaxRate}$. If there is a point, we have found a violation. Otherwise, we insert $r.\mathtt{Salary}$ in $H_1$ and check whether there is a point in $H_2$ that is no smaller than $r.\mathtt{Salary}$. If there is a point, we have found a violation and we insert $r.\mathtt{FedTaxRate}$ into $H_2$ otherwise.\qed
    \eat{When constructing the tree structure, we insert the one-dimensional point $\proj{r}{\mathtt{Salary}}$. Suppose $t_1.\mathtt{Salary} = 3000$ has already been inserted and $t_2$ is being processed. The range query will be $\bX = (-\infty), \bY = (15)$ (15 is the value of \texttt{FedTaxRate} for $t_{2})$. In words, this range query corresponds to finding a point in the tree (i.e. a \texttt{Salary}) that has a value smaller than $15$. Since we are inserting \texttt{Salary} into the tree, if the range query finds a point smaller than $15$, it would correspond to a tuple that has \texttt{Salary} smaller than $15$, thus discovering a violation.}
\end{example}

\begin{algorithm}[!t]
    \small
	\SetCommentSty{textsf}
	\DontPrintSemicolon 
	\SetKwInOut{Input}{Input}
	\SetKwInOut{Output}{Output}
	\SetKwRepeat{Do}{do}{while}
	\SetKwFunction{rangeSearch}{\textsc{booleanRangeSearch}}
	\SetKwFunction{ins}{\textsc{insert}}
        \SetKwFunction{createsearchrange}{\textsc{CreateBothSearchRanges}}
        \SetKwFunction{invertrange}{\textsc{InvertRange}}
	\SetKwProg{myalg}{procedure}{}{}
	\SetKwData{tree}{\textsc{OrthogonalRangeSearch}}
	\SetKwData{op}{$\mathsf{op}$}
        \SetKwData{nonEqCount}{$\mathsf{k}$}
	\SetKwData{col}{$\mathsf{col}$}
	\Input{A tuple $r$ from relation $\R{R}$, DC $\varphi$}
        \Output{Search range and inverted search range}
	\SetKwProg{myproc}{\textsc{procedure}}{}{}
	\SetKwData{ret}{\textbf{return}}
        \myproc{\createsearchrange{$r$, $\varphi$}}{
            $\bX \leftarrow (-\infty, \dots, -\infty), \bY \leftarrow (\infty, \dots, \infty)$ \tcc*{$\bX,\bY,\bX',\bY'$ are indexed by attributes of $R$ that appear in inequality predicates}
            $\bX' \leftarrow (-\infty, \dots, -\infty), \bY' \leftarrow (\infty, \dots, \infty)$ \;
            \ForEach{\text{inequality predicate } $s.C \ \op \ t.D$ \text{ in $\varphi$}}{
                \If{$\op$ is $<$ or $\leq$}{
                    $\bY.C \leftarrow \min \{\bY.C, r.D\}$\;
                    $\bX'.D \leftarrow \max \{\bX'.D, r.C\}$
                }
                \If{$\op$ is $>$ or $\geq$}{
                    $\bX.C \leftarrow \max \{\bX.C, r.D\}$ \;
                    $\bY'.D \leftarrow \min \{\bY'.D, r.C\}$ 
                }
            }
            \ret $\bX, \bY, \bX', \bY'$
        }
	\caption{{\sc Search range generation when processing a new tuple $r$ in presence of heterogeneous inequalities}}
	\label{algo:rangesearch}
\end{algorithm}

\eat{
DCs containing row-level homogeneous predicates and heterogeneous predicates (which are of the form $\proj{s}{\A{A}}\ \op \ \proj{t}{\A{B}}$) can also be processed using range search queries but requires a slightly different range query building technique and two range search data structures. We insert the tuple formed by all columns present in a heterogeneous predicate referenced by $s$ (let's refer these columns as $L_1$) into one data structure and all columns present in a heterogeneous predicate referenced by $t$ (say $L_2)$ into another. When creating the range query tuples $\bX, \bY$ for some $r \in \R{R}$, we use $r(L_1)$ and $r(L_2)$ for doing the range checks in both the data structures. Special care needs to be taken about how we build the trees incrementally. We illustrate with the help of an example. 
\begin{example}
    Consider the DC $\varphi: \neg (\proj{s}{\mathtt{Salary}}\ \leq \ \proj{t}{\mathtt{FedTaxRate}})$. Note that $L_1 = \{\mathtt{Salary}\}$ and $L_2 = \{\mathtt{FedTaxRate}\}$ Suppose we are processing tuple $r \in 
    \R{R}$. We will create two range search data structures $H_1$ (in which we will store $t.\mathtt{Salary}$) and $H_2$ (in which we will store $t.\mathtt{FedTaxRate}$). Given $r$, we first do a range search in $H_1$ to check if there is a point that is no larger than $r.\mathtt{FedTaxRate}$. If there is a point, we have found a violation. Otherwise, we insert $r.\mathtt{Salary}$ in $H_1$ and check whether there is a point in $H_2$ that no smaller than $r.\mathtt{Salary}$. If there is a point, we have found a violation and we insert $r.\mathtt{FedTaxRate}$ into $H_2$ otherwise.
    
    \eat{When constructing the tree structure, we insert the one-dimensional point $\proj{r}{\mathtt{Salary}}$. Suppose $t_1.\mathtt{Salary} = 3000$ has already been inserted and $t_2$ is being processed. The range query will be $\bX = (-\infty), \bY = (15)$ (15 is the value of \texttt{FedTaxRate} for $t_{2})$. In words, this range query corresponds to finding a point in the tree (i.e. a \texttt{Salary}) that has a value smaller than $15$. Since we are inserting \texttt{Salary} into the tree, if the range query finds a point smaller than $15$, it would correspond to a tuple that has \texttt{Salary} smaller than $15$, thus discovering a violation.}
\end{example}

% \begin{table*}[!htp]
% \caption{Summary of the main results} \label{table:summary}
% \scalebox{0.9}{
%  \begin{tabularx}{\linewidth}{ l  @{\extracolsep{\fill}} l l l} 
%  \toprule
%  Algorithm & Answering time & Space & Comments \\
%  \midrule
% Algorithm~\ref{algo:main} with range tree & $O(|\R{R}| \log^k |\R{R}|)$ & $O(|\R{R}| \log^{k-1} |\R{R}|)$  &  early termination\\

% Algorithm~\ref{algo:main} with k-d tree & $O(|\R{R}|^{2-\frac{1}{k}} \log |\R{R}|)$ & $O(|\R{R}|)$ & early termination \\
% \midrule
% Algorithm~\ref{algo:main2} with range tree & $O(|\R{R}| \log^{k-1}|\R{R}|)$ & $O(|\R{R}| \log^{k-1} |\R{R}|)$  & batched version;  no early termination \\

% Algorithm~\ref{algo:main2} with k-d tree & $O(|\R{R}|^{2-\frac{1}{k}})$ & $O(|\R{R}|)$  & batched version;  no early termination \\
% \midrule
% Algorithm~\ref{algo:main3} with range tree & $O(|\R{R}| (\log^{k-1}|\R{R}| + \log |\R{R}|))$ & $O(|\R{R}| \log^{k-1} |\R{R}|)$  & hybrid;  requires sorting \\

% Algorithm~\ref{algo:main3} with k-d tree & $O(|\R{R}|^{2-\frac{1}{k-1}} \log |\R{R}|)$ & $O(|\R{R}|)$  & hybrid; $k \geq 2$;  requires sorting \\
% \bottomrule
%  \end{tabularx}
%  }
%  \end{table*}

\smallskip
\noindent Next, we remove the assumption about homogeneous DC not having any disequality operator predicate, followed by an optimization that improves the runtime for verifying DCs with special structure.
}

\smallskip
\introparagraph{Allowing disequality predicates} Any predicate $\proj{s}{\A{A}}\ \neq\ \proj{t}{\A{B}}$ can be written as a union of two predicates: $(\proj{s}{\A{A}}\ <\ \proj{t}{\A{B}}) \lor (\proj{s}{\A{A}}\ >\ \proj{t}{\A{B}})$. Therefore, a DC containing $\ell$ predicates with $\op$ as $\neq$ can be equivalently written as a conjunction of $2^{\ell}$ DCs containing no disequality operator.

If the original homogeneous DC contains no inequality predicate, then it is possible to reduce the number of equivalent DCs from $2^{\ell}$ to $2^{\ell-1}$. The idea is that a violation $(\tup{s}, \tup{t})$ is symmetric (i.e. $(\tup{t}, \tup{s})$ is also a violation) if the DC contains only equality and disequality predicates. Therefore, when converting a DC to only have inequalities, it suffices to expand $(\tup{s}.A \ \neq \ \tup{t}.A)$ to just $(\tup{s}.A \ < \ \tup{t}.A)$ for one last disequality predicate instead of $(\tup{s}.A \ < \ \tup{t}.A) \lor (\tup{s}.A \ > \ \tup{t}.A)$.

\begin{proposition}
    Given a homogeneous DC $\varphi$ with only equality and $\ell$ disequality predicates, there exists an equivalent conjunction of $2^{\ell-1}$ DCs that contain only equality and inequality predicates.
\end{proposition}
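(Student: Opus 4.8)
The plan is to start from the naive $2^{\ell}$-fold expansion already justified in the preceding paragraph---rewriting each $\neq$ as $<\,\lor\,>$ turns $\varphi$ into an equivalent conjunction of $2^{\ell}$ inequality-only DCs, one for each way of assigning a direction ($<$ or $>$) to the $\ell$ disequalities---and then to collapse \emph{conjugate} assignments into pairs, exploiting that a DC built only from (symmetric) equality and disequality predicates has symmetric violations. Halving the count then amounts to keeping exactly one representative from each conjugate pair.

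Concretely, I would index the $2^{\ell}$ expanded DCs by direction vectors $C \in \{<,>\}^{\ell}$, writing $\varphi_C$ for the DC in which the $i$-th disequality is replaced by its $C_i$-version. The first step records the equivalence
\[
\varphi \text{ holds on } \R{R} \iff \bigwedge_{C \in \{<,>\}^{\ell}} \varphi_C \text{ holds on } \R{R},
\]
which follows by distributing the conjunction of predicates over the disjunctions $\tup{s}.A_i < \tup{t}.A_i \lor \tup{s}.A_i > \tup{t}.A_i$. The second step is the symmetry observation: since every predicate of $\varphi$ is an equality or a disequality and both $=$ and $\neq$ are symmetric, $(\tup{s},\tup{t})$ violates $\varphi$ iff $(\tup{t},\tup{s})$ does. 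I would lift this to the expanded DCs by noting that swapping the two tuples turns each directional predicate into its opposite ($\tup{s}.A_i < \tup{t}.A_i$ becomes $\tup{t}.A_i < \tup{s}.A_i$, i.e.\ the $>$-version), so the map $(\tup{s},\tup{t}) \mapsto (\tup{t},\tup{s})$ is a violation-preserving bijection between $\varphi_C$ and $\varphi_{\bar C}$, where $\bar C$ flips every coordinate of $C$. Hence $\varphi_C$ holds iff $\varphi_{\bar C}$ holds.

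The final step uses this to thin the conjunction. For $\ell \geq 1$ the conjugation $C \mapsto \bar C$ is a fixed-point-free involution on $\{<,>\}^{\ell}$, so it partitions the $2^{\ell}$ vectors into $2^{\ell-1}$ conjugate pairs, and each pair contains exactly one vector whose last coordinate is ${<}$. Keeping that representative from every pair, the previous step yields
\[
\bigwedge_{C} \varphi_C \;\equiv\; \bigwedge_{C:\, C_{\ell} = {<}} \varphi_C ,
\]
a conjunction of exactly $2^{\ell-1}$ equality/inequality DCs equivalent to $\varphi$. Operationally this is precisely ``expand the first $\ell-1$ disequalities into both directions but expand the last one only into its ${<}$-version,'' matching the intuition stated just before the proposition.

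I expect the main obstacle to be making the second step airtight rather than the counting. I must verify that the tuple swap preserves every equality predicate (immediate, since $=$ is symmetric), that it respects the $\tup{s} \neq \tup{t}$ requirement, and---most importantly---that flipping each directional predicate is \emph{exactly} the combinatorial effect of the swap, so that the induced map on direction vectors is precisely conjugation $C \mapsto \bar C$. It is worth flagging explicitly that this halving relies on the absence of genuine (non-equality, non-disequality) inequality predicates: such predicates are not symmetric under swapping and would break the bijection, which is exactly why the proposition is restricted to DCs with only equality and disequality predicates (and why the earlier general expansion yielded $2^{\ell}$ rather than $2^{\ell-1}$ DCs).
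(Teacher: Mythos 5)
Your proof is correct and rests on the same key observation as the paper's: because $=$ and $\neq$ are symmetric under swapping the two tuples, a swap preserves the equality/disequality part of any violation while reversing every directional predicate, so only one orientation of one disequality ever needs to be checked. The paper applies this symmetry once to a single distinguished disequality before expanding the remaining $\ell-1$ (arriving at exactly your representative set $\{C : C_\ell = {<}\}$), whereas you expand to all $2^{\ell}$ direction vectors first and then quotient by the fixed-point-free involution $C \mapsto \bar{C}$ --- a cleaner bookkeeping of the same argument, not a different route.
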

\begin{proof}
Consider the constraint $\varphi: \neg (\phi \land \tup{s}.A \ \neq \ \tup{t}.A)$, where $\phi$ is a conjunction of homogeneous equality and disequality predicates. Let $(q, r)$ be a violation to $\varphi$. Without loss of generality, we assume that $\tup{q}.A < \tup{r}.A$, and then $(q, r)$ is also a violation to $\varphi_1: \neg (\phi \land \tup{s}.A \ < \ \tup{t}.A)$. Since $\phi$ only contains equality and disequality predicates, $(r, q)$ also satisfies $\phi$ by symmetricity, and therefore $(r, q)$ is a violation to $\varphi$ and $\varphi_2: \neg (\phi \land \tup{s}.A \ > \ \tup{t}.A)$. In fact, for any violation $(r, q)$ to $\varphi$, one of $(r, q)$ and $(q, r)$ must violate $\varphi_1$ while the other violates $\varphi_2$. Thus, we only need to check $\varphi_1$ for violations, which contains $l-1$ disequality predicates and can be written as a conjunction of $2^{l-1}$ DCs containing no disequality predicates by logical equivalence.
    \eat{Consider a violation $(\tup{s}, \tup{t})$ for $\varphi$. Without loss of generality, suppose $\tup{s} \prec \tup{t}$ on the $\ell$ coordinates. This violation will be detected by the DC where the $\op$ is $<$ for all inequality predicates. However, since $(\tup{t},\tup{s})$ is also a violation, the DC where the $\op$ is $>$ for all inequality predicates will detect $(\tup{t},\tup{s})$, and thus redundant. In general, since $\ell-1$ predicates have been expanded, every violation when considering the $\ell-1$ predicates will be detected by two DCs but the symmetric property ensures that for the $\ell^{th}$ predicate, only one operator (say $<$) is sufficient.}
\end{proof}

\smallskip
\noindent Note that although Algorithm~\ref{algo:main} is described for a single homogeneous DC, it can be extended to verify multiple DCs in a non-serial fashion. For each tuple in $\R{R}$, one can perform the processing for each DC and return as soon as any of them detect a violation. This ensures we retain the ability to terminate early whenever possible.

\smallskip
\introparagraph{Allowing only one inequality predicate} If a DC has row homogeneous equality predicates and at most one predicate (homogeneous or heterogeneous) containing an inequality, then the verification can be done in linear time. Algorithm~\ref{algo:one-inequality} shows the algorithm. Like the previous algorithms, we begin by partitioning the input into a hash table based on the equality predicates. Let the inequality predicate be $\proj{s}{\A{A}} \ \op \ \proj{t}{\A{B}}$. The main idea is to maintain the running minimum and maximum values for the Columns $\A{A}$ and $\A{B}$ {\em{for each partition of the input}}.
Since the comparison is one-dimensional, it is sufficient to compare against the minimum (or maximum) value.
Lines~\ref{line:one-inequality-init-begin}-\ref{line:one-inequality-init-end} initialize these minimum and maximum values to $+\infty$ and $-\infty$ respectively when we see a tuple that belongs to no existing partition.
Lines~\ref{line:one-inequality-check-begin}-\ref{line:one-inequality-check-end} then perform the inequality check between all previously seen tuples (in this new tuple's partition $v$) and this new tuple. 
If the checks fail, we update the minimum and maximum values for partition $v$ based on this new tuple on Lines~\ref{line:minA}-\ref{line:maxB}.
The algorithm makes only one pass over the entire dataset and the overall time complexity is $O(|\R{R}|)$. While this optimization is simple, it has important implications. In particular, popular constraints such as functional dependencies (FD) are DCs that contain exactly one inequality predicate. Algorithm~\ref{algo:one-inequality} recovers the standard linear time algorithm to verify FDs~\cite{ibaraki1999functional}. However, it is unclear that \facet, in its present form, can achieve the same provable guarantee.

\begin{algorithm}[!t]
    \small
	\SetCommentSty{textsf}
	\DontPrintSemicolon 
	\SetKwInOut{Input}{Input}
	\SetKwInOut{Output}{Output}
	\SetKwRepeat{Do}{do}{while}
	\SetKwFunction{rangeSearch}{\textsc{booleanRangeSearch}}
	\SetKwFunction{ins}{\textsc{insert}}
        \SetKwFunction{lowerrange}{\textsc{SearchRange}}
	\SetKwProg{myalg}{procedure}{}{}
	\SetKwData{tree}{\textsc{OrthogonalRangeSearchTree}}
	\SetKwData{op}{$\mathsf{op}$}
        \SetKwData{nonEqCount}{$\mathsf{k}$}
        \SetKwData{min}{$\mathsf{min}$}
        \SetKwData{max}{$\mathsf{max}$}
        \SetKwFunction{List}{\textsc{list}}
        \SetKwFunction{last}{\textsc{last}}
        \SetKwFunction{first}{\textsc{first}}
	\SetKwData{col}{$\mathsf{col}$}
	\Input{Relation $\R{R}$, DC $\varphi$ containing equality predicates of form $\proj{s}{\A{C}}=\proj{t}{\A{C}}$ (for some $C$) and {\bf{one}} inequality predicate $p$ of form $\proj{s}{\A{A}}\ \op \ \proj{t}{\A{B}}$}
        \Output{True/False}
	\SetKwProg{myproc}{\textsc{procedure}}{}{}
	\SetKwData{ret}{\textbf{return}}
        \SetKwData{flag}{isViolation}
        $\min_{\A{A}}, \max_{\A{A}}, \min_{\A{B}}, \max_{\A{B}}\leftarrow $ empty hash table\;
	\ForEach{$t \in R$}{
            $v \leftarrow \pi_{\vars_{=}(\varphi)} (t)$ \;
            \If{$v \notin \min_{\A{A}}$}{ \label{line:one-inequality-init-begin}
                $\min_{\A{A}}[v], \min_{\A{B}}[v] \leftarrow +\infty, +\infty$\;
                $\max_{\A{A}}[v], \min_{\A{B}}[v] \leftarrow -\infty, -\infty$\label{line:one-inequality-init-end}
            }
            \If{$(p.\op \in \{<, \leq\} \land \min_{\A{A}}[v] \ \op \ t.\A{B}) \lor (p.\op \in \{>, \geq\} \land \max_{\A{A}}[v] \ \op \ t.\A{B})$ \label{line:one-inequality-check-begin}}{
                    \ret \textbf{false}
                }
            \If{$(p.\op \in \{<, \leq\} \land t.\A{A} \ \op \ \max_{\A{B}}[v]) \lor (p.\op \in \{>, \geq\} \land t.\A{A} \ \op \ \min_{\A{B}}[v])$ \label{line:reverse-if}}{
                    \ret \textbf{false} \label{line:one-inequality-check-end}
                }
            $\min_{\A{A}}[v] \leftarrow \min \{\min_{\A{A}}[v], t[\A{A}]\}$ \; \label{line:minA}
            $\min_{\A{B}}[v] \leftarrow \min \{\min_{\A{B}}[v], t[\A{B}]\}$ \; \label{line:minB}
            $\max_{\A{A}}[v] \leftarrow \max \{\max_{\A{A}}[v], t[\A{A}]\}$ \; \label{line:maxA}
            $\max_{\A{B}}[v] \leftarrow \max \{\max_{\A{B}}[v], t[\A{B}]\}$ \; \label{line:maxB}
            % \For{$v \in \{A, A'\}$}{
            %         $\min^b_v \leftarrow \min \{\min^b_v, t[v]\}$ \; \label{line:min}
            %         $\max^b_v \leftarrow \max \{\max^b_v, t[v]\}$ \; \label{line:max}
            % }
        }
        % \ForEach{bucket $b \in H$}{
        %     \ForEach{$t \in H[b]$}{
                
        %     }
        % }
        \ret \textbf{true} \;
	\caption{{\sc DC verification for DCs with row-homogeneous equality and one inequality predicate}}
	\label{algo:one-inequality}
\end{algorithm}

\eat{\smallskip
\introparagraph{Inequality predicate} If a DC has at most one predicate (homogeneous or heterogeneous) containing an inequality, then the verification can be done in linear time. Algorithm~\ref{algo:het} shows the main steps when the inequality predicate is heterogeneous. The algorithm for homogeneous inequality predicate is very similar. Like the previous algorithms, we begin by partitioning the input into a hash table based on the equality predicates. Lines~\ref{line:min}-\ref{line:max} store the min and max values for the first column in the inequality predicate. The key insight is that any tuple pair $(s, t)$ in the same hash bucket forms a violation when $\proj{s}{\A{A}}\ \op \ \proj{t}{\A{B}}$ is true. Let us fix $\op$ to $<$. Then, if $\proj{t}{\A{B}}$ is greater than the min value in column $\A{A}$, we can find  such that. This is precisely the condition checked on line~\ref{line:if}. The algorithm makes only three passes for the entire dataset and the overall time complexity is $O(|\R{R}|)$. While this optimization is simple, it has important implications. In particular, popular constraints such as functional dependencies (FD) are DCs that contain exactly one inequality predicate. Algorithm~\ref{algo:het} recovers the standard linear time algorithm to verify FDs~\cite{ibaraki1999functional} and candidate keys. However, it is not clear that prior work~\cite{pena2021fast} can achieve the same provable guarantee.

\begin{algorithm}[!t]
	\SetCommentSty{textsf}
	\DontPrintSemicolon 
	\SetKwInOut{Input}{Input}
	\SetKwInOut{Output}{Output}
	\SetKwRepeat{Do}{do}{while}
	\SetKwFunction{rangeSearch}{\textsc{booleanRangeSearch}}
	\SetKwFunction{ins}{\textsc{insert}}
        \SetKwFunction{lowerrange}{\textsc{SearchRange}}
	\SetKwProg{myalg}{procedure}{}{}
	\SetKwData{tree}{\textsc{OrthogonalRangeSearchTree}}
	\SetKwData{op}{$\mathsf{op}$}
        \SetKwData{nonEqCount}{$\mathsf{k}$}
        \SetKwData{min}{$\mathsf{min}$}
        \SetKwData{max}{$\mathsf{max}$}
        \SetKwFunction{List}{\textsc{list}}
        \SetKwFunction{last}{\textsc{last}}
        \SetKwFunction{first}{\textsc{first}}
	\SetKwData{col}{$\mathsf{col}$}
	\Input{Relation $\R{R}$, DC $\varphi$ with some equality and one heterogeneous inequality predicate $p: \proj{s}{\A{A}}\ \op \ \proj{t}{\A{B}}$}
        \Output{True/False}
	\SetKwProg{myproc}{\textsc{procedure}}{}{}
	\SetKwData{ret}{\textbf{return}}
        \SetKwData{flag}{isViolation}
        $H \leftarrow $ empty hash table \;
	\ForEach{$t \in R$}{
            $v \leftarrow \pi_{\vars_{=}(\varphi)} (t)$ \;
            \If{$v \not\in H$}{
                    $H[v] \leftarrow \text{new } \List()$
            }
            $H[v].\ins(t)$ \\
            \tcc{Let $b$ be the bucket to which $t$ is hashed}
                    $\min^b_{\A{A}} \leftarrow \min \{\min^b_{\A{A}}, t[\A{A}]\}$ \; \label{line:min}
                    $\max^b_{\A{A}}\leftarrow \max \{\max^b_{\A{A}}, t[\A{A}]\}$ \; \label{line:max}
            % \For{$v \in \{A, A'\}$}{
            %         $\min^b_v \leftarrow \min \{\min^b_v, t[v]\}$ \; \label{line:min}
            %         $\max^b_v \leftarrow \max \{\max^b_v, t[v]\}$ \; \label{line:max}
            % }
        }
        \ForEach{$t \in R$}
        {
            \If{$(p.\op \in \{<, \leq\} \land \min^b_{\A{A}} \ \op \ t.\A{B}) \lor (p.\op \in \{>, \geq\} \land \max^b_{\A{A}} \ \op \ t.\A{B})$ \label{line:if}}{
                    \ret \textbf{false}
                }
        }
        % \ForEach{bucket $b \in H$}{
        %     \ForEach{$t \in H[b]$}{
                
        %     }
        % }
        \ret \textbf{true} \;
	\caption{{\sc DC verification for Heterogeneous predicate}}
	\label{algo:het}
\end{algorithm}
}

\smallskip
\introparagraph{Allowing mixed homogeneous constraints}
We now extend our verification algorithm to work also for mixed homogeneous constraints that
can contain predicates of the form $s.A\,\op\,s.B$ as well as $s.A\,\op\, t.A$.
Let $\forall{s,t}:\neg \phi(s,t)$ be a mixed homogeneous denial constraint.
We first rewrite $\phi$ in the form $\phi_S(s) \wedge \phi_T(t) \wedge \phi_{ST}(s,t)$
where 
$\phi_S$ contains all predicates that mention only $s$ (and not $t$),
$\phi_T$ contains all predicates that mention only $t$ (and not $s$),
and 
$\phi_{ST}$ contains all predicates that mention both $s$ and $t$.
The constraint $\forall{s,t}:\neg\phi$ that we need to verify over a given $\R{R}$ can be 
equivalently rewritten as follows:
\begin{eqnarray*}
\forall{s,t}: \neg \phi & \Leftrightarrow &
\forall{s,t}: \neg (\phi_S(s) \wedge \phi_T(t)) \lor \neg \phi_{ST}(s, t)
\\ & \Leftrightarrow & 
\forall{s,t}: (\phi_S(s) \wedge \phi_T(t)) \Rightarrow \neg \phi_{ST}(s, t)
\\ & \Leftrightarrow & 
\forall{s\in \R{S}}:\forall{t\in \R{T}}: \neg \phi_{ST}(s, t)
\end{eqnarray*}
where $\R{S}$ is the set of all tuples in $\R{R}$ s.t. $\phi_S$ is true,
and $\R{T}$ is the set of all tuples in $\R{R}$ s.t. $\phi_T$ is true. Note that $\R{S}$ and $\R{T}$ can overlap.

We maintain two range search data structures (same as the $H$ in Algorithm~\ref{algo:main}) $H_{\R{S}}$ and $H_{\R{T}}$ for points in $\R{S}$ and $\R{T}$ respectively.
For each tuple (aka point) $q \in \R{R}$, we first check whether it belongs to $\R{S}$ and $\R{T}$.
\begin{enumerate}
    \item If $q \in \R{S}$, we perform range search on $H_{\R{T}}$ to find any point $r$ such that $\phi_{ST}(q, r)$ is true. If there is no such point, we insert $q$ into $H_{\R{S}}$. Otherwise, the constraint does not hold and the algorithm terminates. This step checks whether there is any previously seen point $r$ in $\R{T}$ such that $(q, r)$ forms a violation.
    \item Similarly, if $q \in \R{T}$, we perform range search on $H_{\R{S}}$ to find any point $r$ such that $\phi_{ST}(r, q)$ is true. If there is no such point, we insert $q$ into $H_{\R{T}}$. Otherwise, we output false.
\end{enumerate}

The correctness of the algorithm follows from the logical equivalences established above.
Whenever there exist $s \in \R{S}, t \in \R{T}$ such that $\phi_{ST}(s, t)$ is true, the algorithm is able to identify them no matter whether $s$ precedes $t$ or not in the input relation.
For each point, the algorithm performs at most two range queries, and therefore the big-O complexity is the same as the original algorithm.

\eat{Let $\R{S}^c$ denote the complement of $\R{S}$, that is, all tuples in $\R{R}$ that are not in $\R{S}$. 
Next, let us define three sets. 
The set $ST'$ is defined as $\R{S}\cap \R{T}^c$,
$S'T$ is $\R{S}^c \cap \R{T}$ and $ST$ is $\R{S} \cap \R{T}$.
Now we further rewrite our constraint equivalently as follows:
\begin{eqnarray*}
\forall{s\in \R{S}}: \forall{t\in \R{T}}: \neg \phi_{st}
& \Leftrightarrow &
\forall{s\in ST'\cup ST}: \forall{t\in S'T\cup ST}: \neg \phi_{st}
\\
& \Leftrightarrow &
\forall{s\in ST'}: \forall{t\in S'T}: \neg \phi_{st}
\\ &\wedge & 
\forall{s\in ST'}: \forall{t\in ST}:  \neg \phi_{st}
\\ & \wedge &
\forall{s\in ST}: \forall{t\in S'T}:  \neg \phi_{st}
\\ & \wedge &
\forall{s\in ST}: \forall{t\in ST}:  \neg \phi_{st}
\end{eqnarray*}
We can thus check the original DC by verifying the four subproblems. These four checks 
can all be performed in a single iteration by maintaining {\em{three}} range search data structures $H_1, H_2, H_3$, which will contain tuples from $ST'$, $S'T$ and $ST$ respectively.}
\eat{$H_1$ will contain tuples from $ST'$, $H_2$ will contain from $S'T$ and $H_3$ will contain from $ST$. 
When a new tuple $s$ is processed, we first determine if it belongs to $ST'$, $S'T$, or $ST$.
\\
(1) If $s\in ST'$, we check
$\forall{t\in S'T}:  \phi_{st}(s,t)$ using range search on points in $H_2$ and we check
$\forall{t\in ST}:  \phi_{st}(s,t)$ using range search on points in $H_3$, and if both succeed, we insert $s$ in $H_1$.
\\
(2) If $s\in S'T$, we check
$\forall{t\in ST'}:  \phi_{st}(t,s)$ (note the inverted check) using range search on points in $H_1$ and we check
$\forall{t\in ST}:  \phi_{st}(t,s)$ (note again the inverted check) using range search on points in $H_3$, and 
if both succeed, we insert $s$ in $H_2$.
\\
(3) If $s\in ST$, we check
$\forall{t\in S'T}:  \phi_{st}(s,t)$ using our algorithm on $H_1$ and we check
$\forall{t\in ST}:  \phi_{st}(s,t)\wedge\phi_{st}(t,s)$ (note both the forward and inverted check) using range search on points in $H_3$, and
if both succeed, we insert $s$ in $H_3$.
\\}
\eat{The correctness of the algorithm for mixed homogeneous DC verification follows from the logical equivalences established above. Since $ST'$, $S'T$ and $ST$ are disjoint sets and their union contains no more than $|\R{R}|$, our complexity is no more than a constant factor $3$ of the original.}

\eat{
\begin{example}
    Consider the constraint $\phi: \neg (s.\mathtt{SSN} = t.\mathtt{SSN}\ \land\  s.\mathtt{Salary} \geq s.\mathtt{FedTaxRate})$. Here, $\phi_s = (s.\mathtt{Salary} \leq s.\mathtt{FedTaxRate}), \phi_t = \emptyset, \phi_{st} = 
    (s.\mathtt{SSN} = t.\mathtt{SSN}$). This gives us $\R{S} = \R{R},\R{T} = \R{R}$ and thus, $ST' = \emptyset, S'T = \emptyset, ST = \R{R}$. Now, we only need to verify the fourth subproblem where we need to check $\forall s \in \R{R}: \forall t \in \R{R} : \neg (s.\mathtt{SSN} = t.\mathtt{SSN})$, which the reader can verify that it evaluates to true.
\end{example}
}

\eat{\subsection{Two Variants for Verification} 

\begin{algorithm}[!t]
	\SetCommentSty{textsf}
	\DontPrintSemicolon 
	\SetKwInOut{Input}{Input}
	\SetKwInOut{Output}{Output}
	\SetKwRepeat{Do}{do}{while}
	\SetKwFunction{rangeSearch}{\textsc{booleanRangeSearch}}
	\SetKwFunction{ins}{\textsc{insert}}
        \SetKwFunction{lowerrange}{\textsc{SearchRange}}
        \SetKwFunction{invertrange}{\textsc{InvertRange}}
        \SetKwFunction{len}{\textsc{len}}
	\SetKwProg{myalg}{procedure}{}{}
	\SetKwData{tree}{\textsc{OrthogonalRangeSearchTree}}
	\SetKwData{op}{$\mathsf{op}$}
        \SetKwData{nonEqCount}{$\mathsf{k}$}
	\SetKwData{col}{$\mathsf{col}$}
	\Input{Relation $\R{R}$, Homogeneous DC $\varphi$}
        \Output{True/False}
	\SetKwProg{myproc}{\textsc{procedure}}{}{}
	\SetKwData{ret}{\textbf{return}}
        $\nonEqCount \leftarrow |\vars(\varphi) \setminus \vars_{=}(\varphi)|$ \;
        $H,T \leftarrow $ empty hash table \;
	\ForEach{$t \in R$ \label{line:for}}{
            $v \leftarrow \pi_{\vars_{=}(\varphi)} (t)$ \;
            \If{$v \not\in H$}{
                $H[v] \leftarrow $ new List()
            }
            $H[v].\ins(\pi_{\vars(\varphi) \setminus \vars_{=}(\varphi)}(t))$ \;
            \If{$\nonEqCount = 0 \land H[v].\len() > 1$}{
                \ret \textbf{false}        
            }
        }
        \ForEach{bucket $b \in H$}{
            $T[b] \leftarrow $ create \tree\ with all points in $H[b]$ as input \label{line:build}\;
            \ForEach{$t \in H[b]$}{
                $\bX, \bY \leftarrow \lowerrange{t}$ \tcc*{From Algorithm~\ref{algo:main}}
                %$\bX', \bY' \leftarrow \invertrange(\bX, \bY)$ \;
                \If{$T[b].\rangeSearch(\bX, \bY)$ \label{line:query}}{
                        \ret \textbf{false}
                }
            }
            free space used by $T[b]$
        }
        \ret \textbf{true}
	\caption{{\sc Batched DC verification}}
	\label{algo:main2}
\end{algorithm}

\introparagraph{Batched Verification (variant one)} Algorithm~\ref{algo:main} interleaves insertion and querying of the data structure. However, it is also possible to separate the two, i.e., first construct the entire data structure (by \emph{batching} all inserts together), and then begin querying. Algorithm~\ref{algo:main2} shows the batched version. It begins similarly to Algorithm~\ref{algo:main} where we partition the input into hash buckets by using the projection on equality predicates as the key. The key difference is that instead of building the tree and querying \emph{on-the-fly} in each iteration, we first build the tree for all tuples (line~\ref{line:build}), and then query it in a separate loop (line~\ref{line:query}). The correctness readily follows from \autoref{thm:main}. 

\smallskip
To bound the running time, we use the following two seminal results by Bentley et al.~\cite{bentley1979data}

\begin{theorem}[\cite{bentley1979data}] \label{thm:range}
    Given $N$ k-dimensional points, there exists a data structure, known as a \textbf{range tree}, that can be built in pre-processing time $P(N) = O(N \log^{k-1} N)$, using $S(N) = O(N \log^{k-1} N)$ space, and can answer any Boolean orthogonal range search query in time $T(N) = O(\log^{k-1} N)$.
\end{theorem}

\begin{theorem}[\cite{bentley1979data}] \label{thm:kd}
    Given $N$ k-dimensional points, there exists a data structure, known as a \textbf{k-d tree} that can be built in pre-processing time $P(N) = O(N \log N)$, using $S(N) = O(N)$ space, and can answer any Boolean orthogonal range search query in time $T(N) = O(N^{1-\frac{1}{k}})$.
\end{theorem}

The running time can be bounded similarly to \autoref{thm:main} by observing that the for loop on \autoref{line:for} takes time $O(|\R{R}|)$, range tree construction takes time $P(|\R{R}|)$ and we perform one range query for each tuple in the input. Thus, the total time taken is $P(|\R{R}|) + \sum \limits_{i=1}^{|\R{R}|} T(|\R{R}|)$. Since we free up space after processing each partition, the space requirement is dominated by the partition that contains the largest number of tuples.

\begin{theorem} \label{thm:main2}
    Let $P(n)$ denote the construction time, $S(n)$ the space requirement, and $T(n)$ denote the query time of an orthogonal range search data structure consisting of $n$ points. Then, Algorithm~\ref{algo:main2} runs in time $O(P(|\R{R}|) + |\R{R}| \cdot T(|\R{R}|))$ and uses space $S(|\R{R}|)$.
\end{theorem}

\eat{\smallskip
\introparagraph{Improving Batched DC verification} Algorithm~\ref{algo:main2} can be further optimized by shaving off another log factor from the running time. The key idea is that we can sort the relation $\R{R}$ on one of the columns (say $C^\star$) that participates in a predicate with inequality. The comparator function used for sorting is $\op$ in the predicate. For $<$ and $>$ operators, we also keep track of whether there is at least one tuple pair that satisfies the predicate by keeping track of the minimum and maximum value of column $C^\star$.

When creating the range tree, we can now ignore column $C^\star$, thus reducing the dimensions of the points inserted in the range tree by one. \TODO{add statement and proof?}

\begin{algorithm}[!t]
	\SetCommentSty{textsf}
	\DontPrintSemicolon 
	\SetKwInOut{Input}{Input}
	\SetKwInOut{Output}{Output}
	\SetKwRepeat{Do}{do}{while}
	\SetKwFunction{rangeSearch}{\textsc{booleanRangeSearch}}
	\SetKwFunction{ins}{\textsc{insert}}
        \SetKwFunction{lowerrange}{\textsc{SearchRange}}
        \SetKwFunction{invertrange}{\textsc{InvertRange}}
        \SetKwFunction{len}{\textsc{len}}
	\SetKwProg{myalg}{procedure}{}{}
	\SetKwData{tree}{\textsc{OrthogonalRangeSearchTree}}
	\SetKwData{op}{$\mathsf{op}$}
        \SetKwData{nonEqCount}{$\mathsf{k}$}
	\SetKwData{col}{$\mathsf{col}$}
	\Input{Relation $\R{R}$, Homogeneous DC $\varphi$}
        \Output{True/False}
	\SetKwProg{myproc}{\textsc{procedure}}{}{}
	\SetKwData{ret}{\textbf{return}}
        $\nonEqCount \leftarrow |\vars(\varphi) \setminus \vars_{=}(\varphi)|$ \;
        $H,T \leftarrow $ empty hash table \;
	\ForEach{$t \in R$ \label{line:for}}{
            $v \leftarrow \pi_{\vars_{=}(\varphi)} (t)$ \;
            \If{$v \not\in H$}{
                $H[v] \leftarrow $ new List()
            }
            $H[v].\ins(\pi_{\vars(\varphi) \setminus \vars_{=}(\varphi)}(t))$ \;
            \If{$\nonEqCount = 0 \land H[v].\len() > 1$}{
                \ret \textbf{false}        
            }
        }
        \ForEach{bucket $b \in H$}{
            $T[b] \leftarrow $ create \tree\ for all points in $H[b]$ \label{line:build}\;
            \ForEach{$t \in H[b]$}{
                $\bX, \bY \leftarrow \lowerrange{t}$ \tcc*{From Algorithm~\ref{algo:main}}
                %$\bX', \bY' \leftarrow \invertrange(\bX, \bY)$ \;
                \If{$T[b].\rangeSearch(\bX, \bY)$ \label{line:query}}{
                        \ret \textbf{false}
                }
            }
            free space used by $T[b]$
        }
        \ret \textbf{true}
	\caption{{\sc Batched DC verification}}
	\label{algo:main2}
\end{algorithm}}

Although the running time guarantees of \autoref{thm:main} and \autoref{thm:main2} only differ by a logarithmic factor in the worst-case, the performance on certain instances can vary significantly as shown below.

\begin{proposition}
    For every homogeneous DC $\varphi$ with at least one inequality predicate, there exists a relation $\R{R}$ such that Algorithm~\ref{algo:main} takes $O(1)$ time and Algorithm~\ref{algo:main2} requires $\Omega(|\R{R}|)$ time.
\end{proposition}
\begin{proof}
    We sketch the proof for $\varphi : \forall \tup{s}, \tup{t} \in \R{R},\ \neg (\tup{s}.A < \tup{t}.A)$ which can be extended straightforwardly for other DCs of interest. We construct a unary relation $\R{R}(A)$ of size $N$ as follows: the first tuple $t_1$ is $1$ and the remaining $N-1$ tuples are $2$. Note that $t_1$ forms a violation with every other tuple in the relation. Algorithm~\ref{algo:main} initializes one range tree for all the tuples since there is no equality predicate. After $t_1$ has been inserted, and the next tuple $t_2$ is inserted, the range search query will return true and the algorithm will terminate. Note that all the operations take $O(1)$ time since the tree only contains two tuples. However, Algorithm~\ref{algo:main2} requires $\Omega(|\R{R}|)$ time for the partitioning of the input in the for loop on~\autoref{line:for}.
\end{proof}

\smallskip
\noindent In other words, the proposition above says that \autoref{thm:main} has the advantage of terminating early and returning an answer since it builds the range tree incrementally and does the querying of the data structure on-the-fly. On the other hand, Algorithm~\ref{algo:main2} requires $\Omega(|\R{R}|)$ of processing before it can start querying the range trees. 

\smallskip
\introparagraph{Hybrid Verification (variant two)} We next present another interesting variant that combines the properties of the two algorithms that we have seen so far. The key idea is that we can sort the relation $\R{R}$ on one of the columns (say $C^\star$) that participates in a predicate with the inequality $<$ or $>$\footnote{The technique also works with $\leq$ and $\geq$ but requires a more involved construction which we omit due to space constraints.}. The comparator function used for sorting is $\op$ in the predicate. When creating the range tree, we can now ignore column $C^\star$, thus reducing the dimensions of the points inserted in the range tree by one. The details are shown in Algorithm~\ref{algo:main3}. We demonstrate with the help of an example.

\begin{algorithm}[!t]
	\SetCommentSty{textsf}
	\DontPrintSemicolon 
	\SetKwInOut{Input}{Input}
	\SetKwInOut{Output}{Output}
	\SetKwRepeat{Do}{do}{while}
	\SetKwFunction{rangeSearch}{\textsc{booleanRangeSearch}}
	\SetKwFunction{ins}{\textsc{insert}}
        \SetKwFunction{lowerrange}{\textsc{SearchRange}}
        \SetKwFunction{append}{\textsc{append}}
        \SetKwFunction{invertrange}{\textsc{InvertRange}}
        \SetKwFunction{len}{\textsc{len}}
	\SetKwProg{myalg}{procedure}{}{}
	\SetKwData{tree}{\textsc{OrthogonalRangeSearchTree}}
        \SetKwData{partition}{\textsc{HashPartition}}
	\SetKwData{op}{$\mathsf{op}$}
        \SetKwData{st}{$\mathsf{start}$}
        \SetKwData{ed}{$\mathsf{end}$}
        \SetKwData{nonEqCount}{$\mathsf{k}$}
	\SetKwData{col}{$\mathsf{col}$}
	\Input{Relation $\R{R}$, Homogeneous DC $\varphi$}
        \Output{True/False}
	\SetKwProg{myproc}{\textsc{procedure}}{}{}
	\SetKwData{ret}{\textbf{return}}
 \tcc{$C^\star$ participates in a predicate $p$ with $\op \in \{<,>\}$}
        $H \leftarrow \partition{\R{R}}$ \tcc{partition on $\vars_{=}(\varphi)$}
         $T \leftarrow $empty hash table \;
        \ForEach{bucket $b \in H$}{
            $\R{R}_b \leftarrow $ sorted list all tuples in $H[b]$ on column $C^\star$ \;
            $\mL \leftarrow \emptyset$ \;
            $T[b] \leftarrow \tree{}$ \;
            \For{$i \in \{0, \dots |\R{R}_b|-1\}$}{
                $\bX, \bY \leftarrow \lowerrange(t_i)$ \tcc*{From Algorithm~\ref{algo:main}}
                %$\bX', \bY' \leftarrow \invertrange(\bX, \bY)$ \;
                \If{$T[b].\rangeSearch(\bX, \bY)$}{
                        \ret \textbf{false}
                }
                \If{$t_i.C^\star < t_{i+1}.C^\star$}{
                    \If{$\mL \neq \emptyset$}{
                        \ForEach{$s \in \mL$}{
                            $H[v].\ins(\pi_{\vars(\varphi) \setminus \{\vars_{=}(\varphi),C^\star\}}(s))$
                        }
                        $\mL \leftarrow \emptyset$
                    } \Else{
                        $H[v].\ins(\pi_{\vars(\varphi) \setminus \{\vars_{=}(\varphi),C^\star\}}(t_i))$
                    }
                } \Else{
                    $\mL.\append(t_i)$
                }
            }
        }
        \ret \textbf{true}
	\caption{{\sc Hybrid DC verification}}
	\label{algo:main3}
\end{algorithm}

\begin{example}
    Let us look at the processing of $\phi: (\proj{s}{\mathtt{State}} = \proj{t}{\mathtt{State}} \land \proj{s}{\mathtt{Salary}} <  \proj{t}{\mathtt{Salary}} \land \proj{s}{\mathtt{FedTaxRate}} \geq  \proj{t}{\mathtt{FedTaxRate}})$ using range trees. First, we sort {\bfseries \texttt{Tax}} on \texttt{Salary} in ascending order and then remove the column. This step orders the tuples as $t_1, t_4, t_2, t_3$. This order guarantees that when we process a tuple $s$, it can form a violation only with tuples that appear before $s$. When $t_4$ is processed, the 1D tree is empty (note that $t_1$ belongs to a different partition). When $t_2$ is processed, the tree contains point $10$ (which is $t_4.\mathtt{FedTaxRate}$) and $t_2.\mathtt{FedTaxRate} = 15$ is larger than $10$. Since we are looking for a point that has a value of at least $10$ and there is none in the tree, we continue processing. However, if $t_4.\mathtt{FedTaxRate} = {\color{red} 22}$, then $t_2$ would form a violation with $t_2$ since $t_4$ has a higher salary but lower tax rate. Thus, the worst-case processing time for $\varphi_3$ will only be $O(|\R{R}| \log |\R{R}|)$ instead of $O(|\R{R}| \log^2 |\R{R}|)$.
\end{example}

Algorithm~\ref{algo:main3} has the desirable property of avoiding building the tree structure completely before we begin querying. Instead, it only requires sorting the dataset, which is cheaper. In many cases, the dataset may already be sorted on some column that is used in a DC predicate, thus avoiding the need to sort altogether. However, compared to Algorithm~\ref{algo:main}, it still requires a linear pass over the input. We conclude the sub-section by summarizing the main results obtained from this section in~\autoref{table:summary}.

\begin{table*}[!htp]
\caption{Summary of the main results} \label{table:summary}
\scalebox{0.9}{
 \begin{tabularx}{\linewidth}{ l  @{\extracolsep{\fill}} l l l} 
 \toprule
 Algorithm & Answering time & Space & Comments \\
 \midrule
Algorithm~\ref{algo:main} with range tree & $O(|\R{R}| \log^k |\R{R}|)$ & $O(|\R{R}| \log^{k-1} |\R{R}|)$  &  early termination\\

Algorithm~\ref{algo:main} with k-d tree & $O(|\R{R}|^{2-\frac{1}{k}} \log |\R{R}|)$ & $O(|\R{R}|)$ & early termination \\
\midrule
Algorithm~\ref{algo:main2} with range tree & $O(|\R{R}| \log^{k-1}|\R{R}|)$ & $O(|\R{R}| \log^{k-1} |\R{R}|)$  & batched version;  no early termination \\

Algorithm~\ref{algo:main2} with k-d tree & $O(|\R{R}|^{2-\frac{1}{k}})$ & $O(|\R{R}|)$  & batched version;  no early termination \\
\midrule
Algorithm~\ref{algo:main3} with range tree & $O(|\R{R}| (\log^{k-1}|\R{R}| + \log |\R{R}|))$ & $O(|\R{R}| \log^{k-1} |\R{R}|)$  & hybrid;  requires sorting \\

Algorithm~\ref{algo:main3} with k-d tree & $O(|\R{R}|^{2-\frac{1}{k-1}} \log |\R{R}|)$ & $O(|\R{R}|)$  & hybrid; $k \geq 2$;  requires sorting \\
\bottomrule
 \end{tabularx}
 }
 \end{table*}
}

\section{\sysg Discovery}
\label{sec:discovery}

\begin{algorithm}[t]
    \small
	\SetCommentSty{textsf}
	\DontPrintSemicolon 
	\SetKwInOut{Input}{Input}
	\SetKwInOut{Output}{Output}
	\SetKwRepeat{Do}{do}{while}
	\SetKwFunction{rangeSearch}{\textsc{booleanRangeSearch}}
	\SetKwFunction{ins}{\textsc{insert}}
        \SetKwFunction{lowerrange}{\textsc{SearchRange}}
        \SetKwFunction{len}{\textsc{len}}
        \SetKwFunction{new}{\textsc{new}}
        \SetKwFunction{append}{\textsc{append}}
        \SetKwFunction{List}{\textsc{list}}
	\SetKwProg{myalg}{procedure}{}{}
	\SetKwData{tree}{\textsc{OrthogonalRangeSearchTree}}
        \SetKwData{minimal}{\textsc{Minimal}}
        \SetKwData{nonredundant}{\textsc{NotPruned}}
        \SetKwData{verify}{\textsc{Verify}}
	\SetKwData{op}{$\mathsf{op}$}
        \SetKwData{nonEqCount}{$\mathsf{nonEqCount}$}
	\SetKwData{col}{$\mathsf{col}$}
        \SetKwData{ret}{\textbf{return}}
	\Input{Relation $\R{R}$}
        \Output{List of exact, minimal DCs}
	\SetKwProg{myproc}{\textsc{procedure}}{}{}
        %\SetKwData{verifiedDC}{\mathcal{L}}
        $\mL \leftarrow \new\ \List()$ \tcc{stores exact, minimal DCs}
        $k \leftarrow 0$ \;
        \While{$k \leq |\vars(\R{R})|$}{
            $k \leftarrow k+1$ \;
            \ForEach{candidate $\varphi$ formed using size $k$ subset of $\vars(\R{R})$}{
            \tcc{\minimal\ borrowed from~\cite{chu2013discovering}}
                \If{$\minimal(\mL, \varphi)$ and $\nonredundant(\mL, \varphi)$ and $\verify(R, \varphi)$}{
                    output $\varphi$ \label{out} \;
                    $\mL.\append(\varphi)$
                }
            }
        }
        \ret\ processed $\mL$ using implication test \;from~\cite{chu2013discovering} 
        \myproc{$\nonredundant(\mL, \varphi)$}{
            \ForEach{$\varphi' \in \mL$}{
                $p_1, \dots, p_m \leftarrow \text{predicates in $\varphi'$}$ \;
                \ForEach{$j \in [m]$}{
                    \If{$\varphi$ contains $\{p_i\}_{i \neq j}$ and $\neg p_j$}
                    {
                        \ret \textbf{false}
                    }
                }
            }
            \ret \textbf{true}
        }
	\caption{{\sc DC discovery}}
	\label{algo:main4}
\end{algorithm}

\begin{figure*} 
    \includegraphics[width=\textwidth]{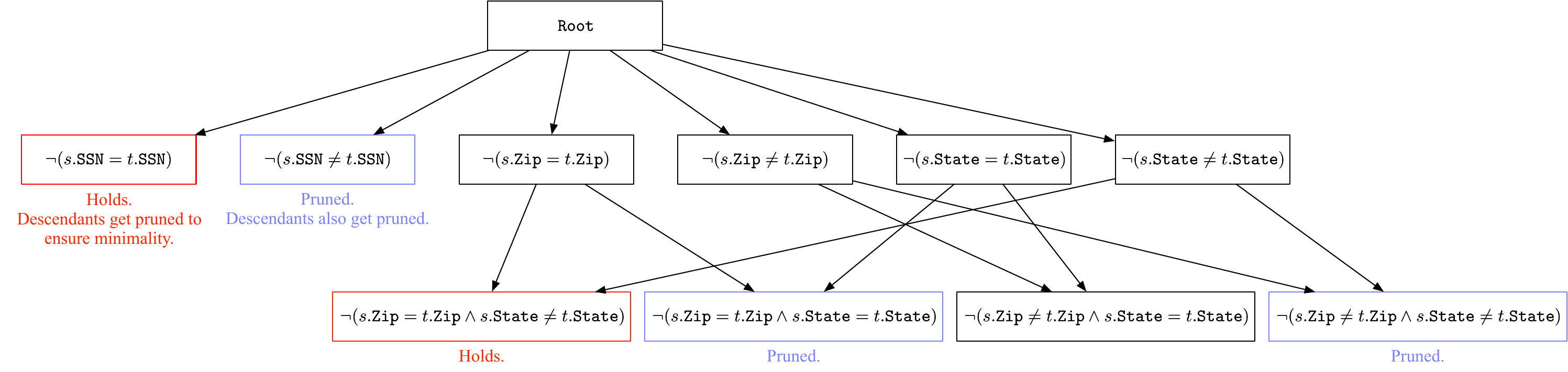}
    \caption{The search space of homogeneous DC discovery on Table~\ref{table:tax} when we only consider $\mathtt{SSN}$, $\mathtt{Zip}$, $\mathtt{State}$ and limit the number of predicates to two. DCs marked as ``Holds'' have been verified to be true and DCs marked as ``Pruned'' are candidates implied by DCs that are true.} \label{fig:latticeexample}
\end{figure*}

In this section, we propose a fast DC discovery algorithm. \eat{with desirable properties as outlined in Section~\ref{sec:intro}} To find all exact, minimal DCs, we use a lattice-based approach where we generate candidate DCs and leverage the verification algorithm to verify whether the DC holds.

Similar to prior works on functional dependency discovery~\cite{huhtala1999tane}, we start with singleton sets of attributes and traverse larger sets in a level-by-level fashion. For each set of attributes, we generate candidate DCs by generating all possible predicates. 
Then, we apply the verification algorithm to check whether the DC holds over the input. If the DC is true, we output the DC to the user and store it in a list $\mL$, which is used to check the minimality of a candidate DC. \eat{ To ensure that the DCs being verified are minimal, we use the stored list of verified DCs to ensure minimality.}
We also prune candidate DCs whose  validity is implied by other DCs. When $\neg (\bigwedge_{i \in [m]} p_i)$ is verified, for any $j \in [m]$, we remove all DCs containing $\{p_i\}_{i \neq j} \cup \{\neg p_j\}$ from the search space. 

\begin{example}
    Figure~\ref{fig:latticeexample} shows an example search sub-space showing the first two levels of the lattice. Level one (with incoming arrows from the \texttt{Root}) contains all DCs over a single column and level two contains candidates generated from predicates in level one. Consider the DC $\phi_1: \neg (s.\mathtt{SSN} = t.\mathtt{SSN})$ and suppose that it holds. Once this DC is verified, it is added to $\mL$ and does not contribute any new candidates in the search space. The next candidate $\phi_1: \neg (s.\mathtt{SSN} = t.\mathtt{SSN})$ is superfluous as it is guaranteed to be false due to logical implication. We also remove all descendants of $\phi_1$ because they will be equivalent to other DCs. For instance, $\neg(\proj{s}{\mathtt{SSN}} \neq \proj{t}{\mathtt{SSN}} \land \proj{s}{\mathtt{Zip}} = \proj{t}{\mathtt{Zip}})$ is equivalent to $\neg(\proj{s}{\mathtt{Zip}} = \proj{t}{\mathtt{Zip}})$, and the latter has already been checked on level one. On level two, the first candidate $\neg (\proj{s}{\mathtt{Zip}} = \proj{t}{\mathtt{Zip}} \land \proj{s}{\mathtt{State}} \neq \proj{t}{\mathtt{State}})$ holds and added to $\mL$, which helps us prune the two other candidates in the second level which are marked in the figure.\qed
\end{example}

\eat{
In the example presented in Figure~\ref{fig:latticeexample}, after $\neg(\proj{s}{\mathtt{SSN}} = \proj{t}{\mathtt{SSN}})$ is verified, there is no need to check $\neg(\proj{s}{\mathtt{SSN}} \neq \proj{t}{\mathtt{SSN}})$ since it must be false due to logical implication. We also remove all descendants of $\neg(\proj{s}{\mathtt{SSN}} \neq \proj{t}{\mathtt{SSN}})$ from consideration because they are equivalent to some other DCs. For instance, $\neg(\proj{s}{\mathtt{SSN}} \neq \proj{t}{\mathtt{SSN}} \land \proj{s}{\mathtt{Zip}} = \proj{t}{\mathtt{Zip}})$ is equivalent to $\neg(\proj{s}{\mathtt{Zip}} = \proj{t}{\mathtt{Zip}})$, and the latter has been checked in the previous level.}
% When $ \varphi : \forall \tup{s},\tup{t} \in R, \neg (\bigwedge_{i \in [m]} p_i)$ is verified to be true, for all $j \in [m]$ we remove $\varphi' : \forall \tup{s},\tup{t} \in R, \neg (\bigwedge_{i \in [m], i \neq j} p_i \land \neg p_j)$ and any DC whose predicate set is a superset of $\varphi'$ from the search space. The reason is that when $\varphi$ is true, $\neg p_j$ is a redundant predicate in $\varphi'$, and $\varphi'$ is equivalent to $\varphi'' : \forall \tup{s},\tup{t} \in R, \neg (\bigwedge_{i \in [m], i \neq j} p_i)$. Similarly, for any DC whose predicate set is a superset of $\varphi'$, we can remove $\neg p_j$ from the predicate set and get an equivalent DC.

\begin{table*}[!h] 
\caption{List of denial constraints used in experiments for each dataset.} \label{table:dc}
\begin{center}
 \resizebox{0.9\linewidth}{!}{\small \begin{tabularx}{\textwidth}{ c c c c l} 
 \toprule
 Dataset & Cardinality & \#Columns & DC number & Denial constraint \\
 \midrule
 \midrule
 $D_1$ & 50M & 28 & $\varphi_{1,1}$ &  $\neg (s.A = t.A \land s.B = t.B \land s.C \neq t.C \land s.D \neq t.D)$\\
 $D_1$ & 50M& 28 &$\varphi_{1,2}$ & $\neg (s.C = t.C \land s.E = t.E \land s.F = t.F \land s.G \neq t.G \land s.H \neq t.H)$ \\
$D_1$ & 50M& 28 &$\varphi_{1,3}$ & $\neg (s.B = t.B \land s.I = t.I \land s.J = t.J \land s.K \neq t.K \land s.L \neq t.L)$ \\
 $D_1$ & 50M& 28 &$\varphi_{1,4}$ &  $\neg (s.A = t.A \land s.I = t.I \land s.M > t.M \land s.N \neq t.N)$\\
 \midrule
 $D_2$ & 25M & 28&$\varphi_{2,1}$ &  $\neg (s.A=t.A \land s.B=t.B \land s.C \geq t.C \land s.D \leq t.D \land s.E \leq t.E \land s.F \geq t.F \land s.G>t.G)$ \\
 $D_2$ & 25M& 28&$\varphi_{2,2}$ &  $\neg (s.A \neq t.A \land s.B=t.B \land s.H \leq t.H \land s.F \geq t.F \land s.G \geq t.G)$ \\
 $D_2$ & 25M& 28&$\varphi_{2,3}$ &  $\neg (s.A=t.A \land s.I \neq t.I \land s.D \leq t.D \land s.G \geq t.G \land s.J=t.J)$ \\
 $D_2$ & 25M& 28&$\varphi_{2,4}$ &  $\neg (s.C \leq t.C \land s.D \leq t.D \land s.K=t.K)$ \\
 \midrule
 $D_3$ & 10M & 76 &$\varphi_{3,1}$ &  $\neg (s.A = t.A  \land  s.B \leq t.B  \land  s.C > t.C)$ \\
 $D_3$ & 10M& 76 &$\varphi_{3,2}$ &  $\neg (s.A = t.A  \land  s.B \leq t.B  \land  s.D > t.D)$ \\
 $D_3$ & 10M& 76 &$\varphi_{3,3}$ &  $\neg (s.A = t.A  \land  s.B < t.B  \land  s.E = t.E)$ \\
 $D_3$ & 10M& 76 &$\varphi_{3,4}$ &  $\neg (s.E = t.E  \land  s.F \neq t.F  \land  s.G \neq t.G)$ \\
  \midrule
 $D_4$ & 10M & 80&$\varphi_{4,1}$ &  $\neg (s.A=t.A \land s.B \neq t.B \land s.C=t.C \land s.D=t.D)$ \\
 $D_4$ & 10M& 80&$\varphi_{4,2}$ &  $\neg (s.E \neq t.E \land s.F=t.F \land s.G=t.G \land s.B \neq t.B)$ \\
 $D_4$ & 10M& 80&$\varphi_{4,3}$ &  $\neg (s.H=t.H \land s.I=t.I \land s.J \leq t.J \land s.K \geq t.K)$ \\
 $D_4$ & 10M& 80&$\varphi_{4,4}$ &  $\neg (s.L=t.L \land s.M \neq t.M \land s.N \neq t.N)$ \\
 \bottomrule
 \end{tabularx}}
\label{table:overhead}
\end{center}
\end{table*}

Algorithm~\ref{algo:main4} describes the steps for DC discovery. It is straightforward to see that the time complexity of the algorithm is the product of \# candidate DCs considered and the verification time complexity. Although the DCs in the output of Algorithm~\ref{algo:main4} are minimal and we prune candidates during the search, there may still exist exact, minimals DCs that can be implied by other exact, minimal DCs. As a post-processing step, we can use the implication test algorithm proposed by Chu et al.~\cite{chu2013discovering} to find as many such DCs as possible and remove them. This step is also followed by~\cite{pena2019discovery, pena2020efficient}. Note that the implication test does not guarantee removing all of the implied DCs, and the complete test is a coNP-complete problem~\cite{baudinet1999constraint}.

Note that our algorithm can be further improved by incorporating sampling-based verification as a pre-filter and collecting multiple DCs to verify and use the ideas from~\cite{pena2019discovery} to exploit common predicates between the candidates. Our solution is also \emph{embarrassingly parallel} and can be easily extended to use multiple processors for verifying candidates. However, we intentionally keep our implementation simple since it already works well in practice for our production customer use cases. We leave the important topic of incorporating these optimizations as a future study topic.

\smallskip
\looseness-1
\introparagraph{Comparison with prior work} First, observe that we have the capability to output the candidate DC to the user immediately after the verification has been done. Thus, the user can terminate the algorithm at any point in time, a desirable property since the user can interrupt the algorithm and still get answers. This \emph{anytime} property for DC discovery is a direct consequence of faster DC verification. Indeed, since all prior methods for DC verification may require a quadratic amount of space (and thus, time), they cannot be used for our setting. Second, since the lattice is traversed in increasing size of the candidate constraints, we get the added benefit of generating \emph{succinct} constraints, a desirable property~\cite{chu2013discovering} and in line with the minimum descriptor length principle. Lastly, our proposed solution is space efficient and requires only $O(|\R{R}| + |\mL|)$ amount of space when the verification method uses $k$-d trees,  allowing our solution to run on commodity machines. This property is not achievable by any other non-trivial discovery method known so far.
\section{Experimental Evaluation}
\label{sec:evaluation}
In this section, we report the results of our experimental evaluation. In particular, we seek to answer the following questions:
\begin{enumerate}[label=\textbf{(Q.\arabic*)}, ref=Q.\arabic*]
    \item What is the performance improvement (time and space) of the \sysg verification algorithm compared to \facet? \label{one}
    \item What is the impact of the optimizations proposed in Section~\ref{sec:opt} in the overall verification time? \label{two}
    \item How does the performance and scalability of \sysg discovery compare to existing solutions? \label{three}
\end{enumerate}
%We describe our experimental settings in Section~\ref{subsec:settings}

\subsection{Experimental Setting} \label{subsec:settings}
\eat{\begin{table}[!htp]
\caption{Cardinality (in millions) of cluster pairs constructed by \facet\ and number of nodes in the tree constructed by \syso\ and \syskd.} \label{table:space}
 \begin{tabularx}{\linewidth}{ c @{\extracolsep{\fill}} r  r  r} 
 \toprule
 DC number & \facet\ & \syso\ & \syskd\ \\
 \midrule
 $\varphi_{1,1}$ & 455 & 115 & 7.5 \\
 $\varphi_{1,2}$ & 605 & 160 & 10.8 \\
 $\varphi_{1,3}$ & 905 & 375 & 22.5 \\
 $\varphi_{1,4}$ & 1980 & 1170 & 50 \\
\midrule
 $\varphi_{2,1}$ & 1524 & 870 & 2.2 \\
 $\varphi_{2,2}$ & 1680 & 886 & 2.5 \\
 $\varphi_{2,3}$ & 781 & 420 & 1 \\
 $\varphi_{2,4}$ & 950 & 510 & 25 \\
\midrule
 $\varphi_{3,1}$ & 11 & 5.5 & 1.1 \\
 $\varphi_{3,2}$ & 11 & 7.4 & 1.8 \\
 $\varphi_{3,3}$ & 23 & 11 & 5.6 \\
 $\varphi_{3,4}$ & 55 & 20 & 10 \\
 \midrule
 $\varphi_{4,1}$ & 21 & 15 & 1.1 \\
 $\varphi_{4,2}$ & 45 & 15 & 1.9 \\
 $\varphi_{4,3}$ & 54 & 18 & 2.2 \\
 $\varphi_{4,4}$ & 98 & 55 & 10 \\
 \bottomrule
 \end{tabularx}
 \end{table}}
\begin{figure*}
    \scalebox{1}{
    \hspace*{-1em} \includegraphics[width=\textwidth]{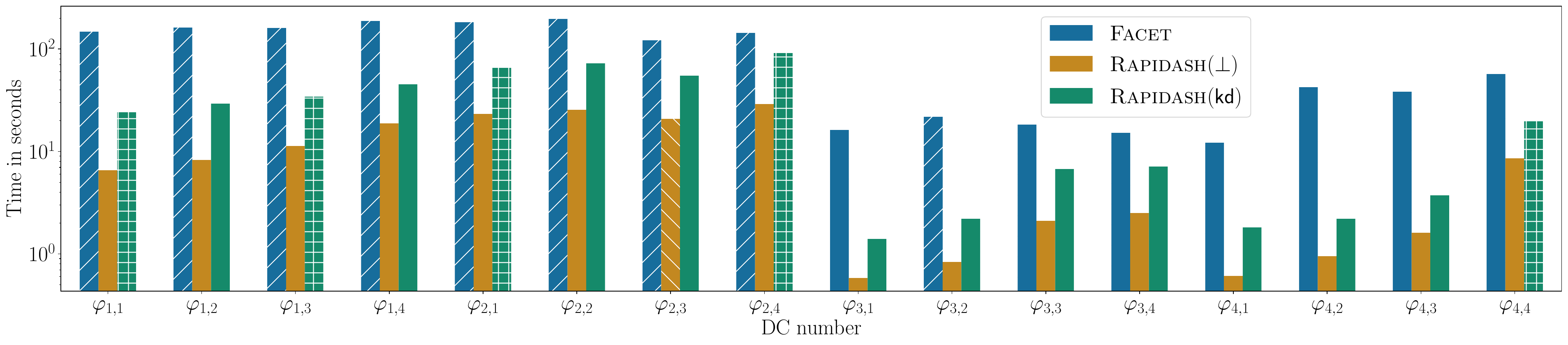}
    }
    \vspace{-2mm}
    \caption{Running time (in seconds) of different algorithms for DC verification.} \label{fig:runtime}
\end{figure*}

Table~\ref{table:dc} lists twelve DCs over four production datasets that we use in our experiments\footnote{The column names in the DCs have been omitted due to security and privacy concerns.}. Two of these datasets contain banking information, one dataset is related to the shipping of documents and products, and the last dataset contains sales information. Each dataset contains a mix of categorical, numeric, and datetime columns. 
% For each of these datasets, ensuring the integrity of the data is critical (i.e. there must be no errors). This requirement motivates the necessity of using exact DC verification and discovery. 
To stress-test our algorithm and prior work, we use complex data quality rules that are either discovered automatically or manually verified to be meaningful. Further, to make sure that the DCs are not trivial to falsify, we pick 3 DCs by taking a $10\%$ sample of each dataset and discover DCs that are true over the sample. The fourth DC (denoted by $\varphi_{i, 4}$ for dataset $D_i$) holds over the full dataset. \eat{We do not use the DCs from~\cite{pena2021fast} since $7$ out of the $12$ constraints can be verified by our algorithm in provably linear time, and thus cannot be used to stress-test our proposed strategies.}

We ran all experiments on an Intel(R) Xeon(R) W-2255 CPU @ 3.70GHz machine with 128GB RAM running Windows 10 Enterprise (version 22H2). All of our experiments are executed over a single core and in the main memory setting. Similar to all prior work, we implemented our algorithm in Java. Despite our repeated attempts, we were not able to obtain the original \facet\ source code from the authors of~\cite{pena2022fast}. Therefore, we implemented \facet\ in Java using the Metanome infrastructure from \cite{bleifuss2017efficient} and \cite{pena2019discovery}, with all optimizations as described in~\cite{pena2022fast}. Further, for verification, we ensure that \facet\ execution terminates as soon as the first violation is found.
All reported running times are the trimmed mean of five independent executions after the dataset has been loaded in memory.

For verification (see Algorithm~\ref{algo:main}), we use a standard implementation of orthogonal range trees (referred to as \syso) and k-d trees (referred to as \syskd). Like \facet, \textsc{Rapidash} does not require any traditional column indexing. All data structures are built on-the-fly. \eat{We also implement the batched version of verification (Algorithm~\ref{algo:main2} with the improved optimization) to understand the tradeoffs between incremental construction vs. batched construction.} For discovery, we compare our proposed algorithm with \textsc{Hydra} and \textsc{DCFinder}, two state-of-the-art DC mining algorithms. We use the same predicate space as defined in~\cite{pena2020efficient}.

 \begin{figure*}
    \scalebox{0.95}{
    \hspace*{-1em} \includegraphics[width=\textwidth]{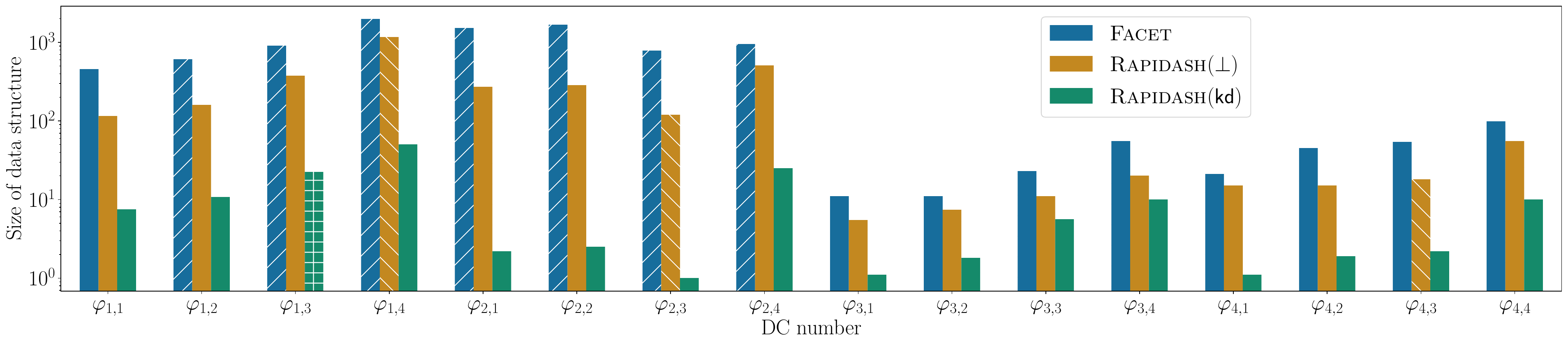}
    }
    \vspace{-2mm}
    \caption{Space requirement of different algorithms for DC verification. For \facet, space usage is the cardinality (in millions) of cluster pairs constructed and the number of nodes in the tree constructed for \syso\ and \syskd.} \label{fig:space}
\end{figure*}

\begin{figure*}
    \scalebox{1}{
    \includegraphics[width=1\textwidth]{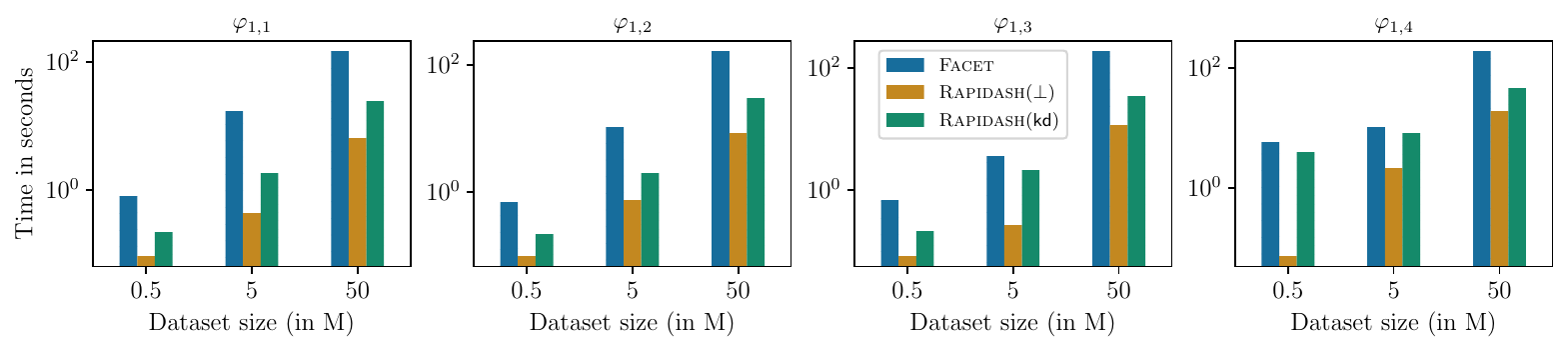}}
    \vspace{-4mm}
    \caption{Running time (in seconds) for DC verification on $D_1$ with varying cardinality.} \label{fig:scale}
    \vspace{-4mm}
\end{figure*}

\subsection{DC Verification} \label{subsec:comp}
In this section, we answer \textbf{\ref{one}} (performance) and \textbf{\ref{two}} (scalability and optimizations). Figure~\ref{fig:runtime} shows the running time (in log scale) of~\sys\ for all datasets and DCs. The speedup obtained by our algorithm is close to an order of magnitude and up to $40\times$. Compared to \facet, both algorithms perform significantly better on all DCs. \syso\ performs better than \syskd\ on all DCs. This is not surprising since using k-d trees for orthogonal range search is more expensive as shown by their big-O time analysis. However, \syskd\ is still faster than \facet\ by up to $20\times$. 

The speedup obtained by \sys\ can be attributed to two reasons. First, \sys\ can terminate as soon as a violation is discovered, as opposed to \facet\ that cannot do early termination in general (see Section~\ref{sec:limitation}). Therefore, in most cases, our algorithm does not require looking at the entire relation. This behavior can be empirically observed by noticing that the running time for $\varphi_{i,4}$ is greater than all other DCs for a fixed dataset. Since $\varphi_{i,4}$ holds for the entire dataset, the algorithm needs to process every row and thus there is no opportunity for early termination. 
$\varphi_{4,1}$ (functional dependency) also exemplifies the same observation.  Although \facet\ takes only $10$ seconds, \syso\ is over a magnitude faster.

The second reason is that \sys\ does not require any expensive materialization as opposed to the cluster pair generation that is done by \facet. As mentioned in Section~\ref{sec:limitation}, \facet\ constructs a refinement pipeline, and the intermediate representation of the cluster pairs can become expensive for predicates containing inequalities and disequalities. Figure~\ref{fig:space} shows the size of the data structures constructed by each algorithm. For \facet, the size is the cardinality of the cluster pairs generated at each stage of the refinement pipelines. For \sys\ based algorithms, the size refers to the number of points inserted in the tree. For all DCs, \facet\ used $1.4-8\times$ more space compared to \syso. \syskd\, was a further order of magnitude lower in its space requirement compared to \syso in line with the theoretical predictions presented in Section~\ref{sec:verification}. 

\smallskip
\introparagraph{Scalability} To study the scalability of \sys, we use dataset $D_1$ and vary the number of rows to understand the impact of input size on the running time of the DCs. Figure~\ref{fig:scale} shows the results when vary the dataset size of $D_1$ from $0.5$M to $50$M. Both \syso\ and \syskd\ scale almost linearly for the first three DCs. For $\varphi_{1,4}$, while \syso\ scales linearly, \syskd\ has super linear scalability, which is in line with the expectation. The behavior of \sys\ on other datasets was also very similar. The performance gap between \facet\ and our solution narrows when the dataset size is small. This is expected since \facet\ performance depends on the sizes of cluster pairs generated by refinements. If the size of the partitions generated after processing equality predicates is small, refinement processing of other non-equality predicates is not as expensive compared to larger partitions.

\smallskip
\introparagraph{Inequality Predicate Optimization} From the set of DCs considered, the single inequality optimization as described in Section~\ref{sec:opt} is applicable to $\varphi_{3,3}$ and $\varphi_{4,1}$\footnote{After converting disequality to inequality}. Using the homogeneous version of Algorithm~\ref{algo:one-inequality}, we observed a speedup of 1.2$\times$ and 1.1$\times$ respectively, compared to the fastest implementation \syso. The main reason for the improvement is that instead of creating a binary tree on the column in the inequality predicate, we only keep track of the minimum and maximum value of the column for the tuples present in a partition after hashing. 

\smallskip
\introparagraph{Disequality Predicate Optimization} The disequality predicate optimization is applicable to $\varphi_{1,1}, \varphi_{1,2}, \varphi_{1,3}$ from the four DCs on $D_1$. Each of the three constraints has exactly two disequality predicates. With the optimization switched on, we observed an improvement of $2\times$ for each of the three DCs owing to the fact that the optimization generates two candidates of four. Thus, the algorithm only needs to do half the work.

\eat{\smallskip
\introparagraph{Verification using Batched and Hybrid versions} The last experiment in this subsection compares the performance of constructing the entire tree up front. Table~\ref{table:speedup1} shows the main result. We observed that for \syso, using batched version is always slower. This is primarily due to no opportunity for early termination and range querying over the entire dataset instead of a partial subset. The same behavior was also observed for \syskd\ on $\varphi_{1,1}$ and $\varphi_{1,2}$. However, for $\varphi_{1,3}$ and $\varphi_{1,4}$, the batched version of \syskd\ provided speedups. This behavior is explained by the fact that creating kd-tree with the entire dataset allows for better balancing of the tree structure, which in turn improves the range search query performance~\cite{orenstein1984class}. We leave the deeper investigation of how to use improved versions of kd-trees (such as adaptive kd-trees) for faster performance as a problem for future work. \TODO{incomplete; need to add hybrid}

\begin{table}[!htp]
\caption{Speedup obtained using batched/improved batched over non-batched version for $D_1$.} \label{table:speedup1}
 \begin{tabularx}{\linewidth}{ l  @{\extracolsep{\fill}} r r r r} 
 \toprule
 Algorithm & $\varphi_{1,1}$ & $\varphi_{1,2}$ & $\varphi_{1,3}$& $\varphi_{1,4}$ \\
 \midrule
Batched \syso\ & 0.82 & 0.82 & 0.94 &  1.25 \\
Batched \syskd\ & 0.96 & 0.98 & 1.21 &  2.21 \\
\midrule
Hybrid \syso\ &  &  &  &   \\
Hybrid \syskd\ &  &  &  &  \\
\bottomrule
 \end{tabularx}
 \end{table}}

\subsection{DC Discovery}
\label{subsec:discovery}
This section is dedicated to answering \textbf{\ref{three}}.

\smallskip
\looseness-1
\introparagraph{Performance} We evaluate the performance of our DC discovery algorithm (\sysdisc) in comparison to \textsc{Hydra} and \textsc{DCFinder} using all four datasets. We'd like to note that our datasets are much larger than those used in prior work (\textsc{Hydra} and \textsc{DCFinder} were evaluated on datasets consisting of up to 1M rows). We run the experiments with a time limit of $48$ hours. For all datasets, both \textsc{Hydra} and \textsc{DCFinder} \textbf{{could not finish}} the computation of the evidence set within the time limit \emph{{for any dataset}}. This is not surprising since the evidence-set can be super linear in the size of the dataset and has an exponential dependency on the number of columns. Therefore, even after spending a lot of computing resources, the user does not get any information at all about whether there even exists a DC or not. In contrast, \sysdisc\ was able to discover all constraints over a single attribute (i.e. $k=1$) within $10$ minutes for all datasets. Constraints over single attributes are already interesting since it includes single-column primary keys, finding whether columns that are empty, or sorted in a particular order. All constraints are generated over pairs of attributes ($k=2$) within one hour of starting the discovery process. Further, since \sysdisc\ continuously outputs DCs, the user is able to still get useful information even if the algorithm isn't allowed to run to completion.

\smallskip
\introparagraph{Scalability micro-benchmark} To understand the scaling behavior, we create a micro-benchmark where we vary the number of rows and columns in dataset $D_1$ and run \textsc{DCFinder}\footnote{We omit a microbenchmark with \textsc{Hydra} since \textsc{DCFinder} is known to be faster~\cite{pena2019discovery}} to generate all constraints over at most three columns. We intentionally keep the dataset size small to ensure that \textsc{DCFinder} can actually terminate. Figure~\ref{fig:dcrow} shows the scalability wrt. to varying the cardinality of $D_1$. \textsc{DCFinder} is faster than our solution when the dataset size is $10^5$ rows but its running time grows very quickly as the dataset size increases, demonstrating the super-linear running time empirically. This growth is entirely due to the evidence-set computation step. \sysdisc\ on the other hand has a much slower growth in running time. The same behavior was also observed when varying the number of columns but keeping the cardinality as $5 \times 10^5$, as shown in Figure~\ref{fig:dccol}. Even for only $25$ columns in a small dataset, the evidence set computation becomes a blocker. Note that the jump in \textsc{DCFinder} running time when going from $10$ to $15$ columns is larger than when going from $15$ to $20$ columns. This is because the evidence set computation is sensitive to column cardinality and whether it is numeric or categorical. Recall that categorical columns only admit $=, \neq$ as operators but numerical columns can have any operator in the predicate. When numerical columns are added, not only do they generate $6$ row-level homogeneous predicates but also column-level and heterogenous predicates. This leads to a blowup in predicate space which in-turn makes the evidence set larger. %In summary, the microbenchmark shows 

\begin{figure}[!tp]
    \scalebox{0.92}{
    \includegraphics[width=\linewidth]{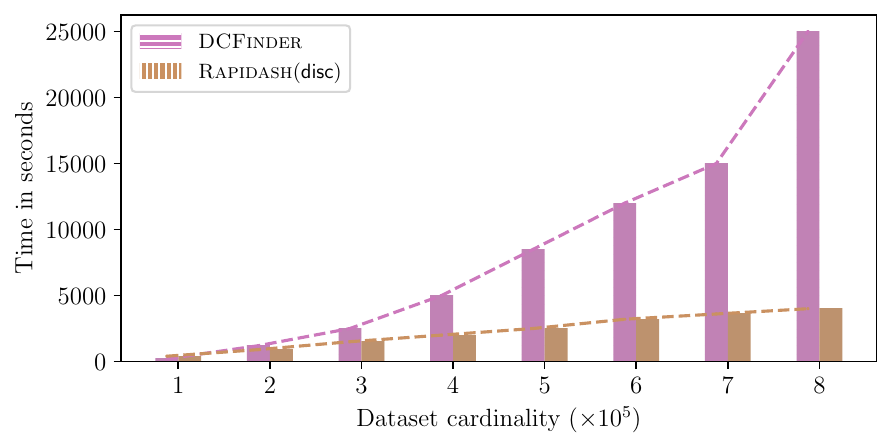}
    }
    \vspace{-3mm}
    \caption{Running time of \textsc{DCFinder} vs. \sysdisc\ for varying cardinality of $D_1$ with $15$ columns. \eat{\af{Use a pattern to make this plot accessible. Just distinguishing by color is not inclusive as color-blind people might not be able to distinguish it. We can make one solid color and the other dotted/dashed. This comment applies to all of the bar plots.}} }\label{fig:dcrow}
    \vspace{-4mm}
\end{figure}

\begin{figure}[t]
    \scalebox{0.92}{
    \includegraphics[width=\linewidth]{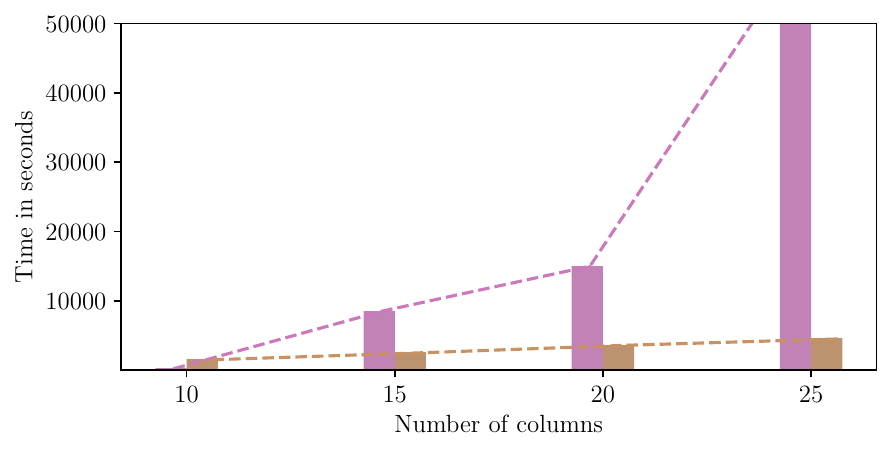}
    }
    \vspace{-2mm}
    \caption{Running time of \textsc{DCFinder} vs. \sysdisc\ for varying number of columns with $|D_1| = 5 \cdot 10^5$. Criss-cross hashed bar means experiment could not complete in $24$ hours.} \label{fig:dccol}
    \vspace{-4mm}
\end{figure}

% \subsection{Summary of Experimental Results} \label{sec:summary}
% Overall, our experimental evaluation brought up the following key insights:

% \textbf{DC verification}: \sysg is up to $40 \times$ faster than the state-of-the-art when evaluated on real-world datasets and more complex constraints compared to prior work.

% \textbf{DC discovery}: Our anytime algorithm can output constraints within the first ten minutes when prior work does not terminate within the first 48 hrs. Our scalability analysis shows that even though prior work is quite efficient on small datasets, the running time quickly grows as the number of rows/columns increases. Our proposed algorithm has a much slowing growth in running time allowing it to scale to large datasets.

\section{Related Work}
\label{sec:related}
DCs as an integrity constraint language was originally proposed by Chu et al.~\cite{chu2013discovering}. We refer the reader to~\cite{abedjan2017data, abedjan2018data} for a general overview.

\smallskip
\introparagraph{DC Verification} To the best of our knowledge, \textsc{Facet}~\cite{pena2021fast} is the state-of-the-art algorithm for DC verification. In more detail, given a DC, \textsc{Facet} is able to find all constraint violation pairs which is sufficient to detect whether a DC holds or not. \textsc{Facet} follows the design of \textsc{VioFinder}~\cite{pena2020efficient} and uses an operator
called \emph{refinement} to evaluate DC predicates. Our proposed algorithm has better worst-case time/space complexity than \textsc{Facet} which results in significant performance improvements in practice. Previous works on data cleaning~\cite{rekatsinas2017holoclean, geerts2020cleaning, fan2021parallel} rely on relational DBMSs to detect DC violations, where DCs are translated into SQL queries. Those DBMS-based methods often fall short when DCs contain inequalities, and they are slower than DC-specific methods by orders of magnitude as shown in ~\cite{pena2020efficient, pena2021fast}.

%\smallskip
\introparagraph{DC Discovery} The two state-of-the-art systems that cater to \textbf{exact} DC discovery are \textsc{Hydra}~\cite{bleifuss2017efficient} and \textsc{DCFinder}~\cite{pena2019discovery}. Both of these systems rely on the two-step  process of first building the evidence set, followed by eumerating the DCs. In particular,~\cite{bleifuss2017efficient} proposed a hybrid strategy that combines DC discovery on a small sample with further refinement based on DC violations on the full instance. \textsc{DCFinder} is designed for approximate DC discovery, but its optimizations also apply to exact DC discovery. The two-step approach is also been successfully used for other dependency discovery algorithms~\cite{schirmer2020efficient, papenbrock2016hybrid}. Following \textsc{DCFinder}, several systems have been proposed for efficient \textbf{approximate} constraint discovery. \textsc{ADCMiner}~\cite{livshits2020approximate} is an extension to \textsc{DCFinder} that supports user-defined semantics of approximation and sampling is used to reduce the computation cost. \textsc{ECP}~\cite{pena2022fast} introduces customized data representations, indexes and algorithms for both evidence set building and DC enumeration to achieve better parallelism. \textsc{FastADC}~\cite{xiao2022fast} utilizes a condensed representation of evidence sets to support more efficient bit operations and cache utilization in the evidence set building stage and extends the evidence inversion technique from \textsc{Hydra} for approximate DC enumeration. All these algorithms are evidence-set based whose limitations have been discussed extensively in this paper. Our work focuses on exact DC discovery. Extending it to account for approximate discovery and compare with the techniques discussed above is left for future work. 

The lattice-based approach for restricted classes of constraint discovery (such as functional dependencies) has been employed by several works in the past~\cite{huhtala1999tane, abedjan2014dfd, papenbrock2016hybrid}. However, to the best of our knowledge, methods using lattice-based discovery have not been used for general DC mining as the validation of DCs using existing algorithms is expensive. Our work remedies this issue by proposing near-optimal algorithms for verifying any DC.

%\smallskip
\introparagraph{Range Searching} The connection between geometric algorithms and general join query processing has been made by several prior works~\cite{khamis2016joins, ngo2014beyond}. Specifically, range searching has been used for aggregate query processing~\cite{khamis2020functional}. In fact, optimizations introduced in this paper could also be applied to certain queries considered in~\cite{wang2022conjunctive} since DCs can be expressed as CQs with comparisons. Range trees and their variants have also been extensively used in geospatial information systems (see~\cite{arge2008priority, guttman1984r, beckmann1990r, beckmann2009revised, cheung1998enhanced, kothuri2002quadtree, de2000computational} for an overview) and indexes for database systems~\cite{hellerstein1995generalized}. Ours is the first work to make the connection between constraint verification and discovery using ideas from computational geometry. For an overview of the theoretical aspects of range searching, we refer the reader to~\cite{agarwal2004range}.
\section{Conclusion and Future Work}
\label{sec:conclusions}
In this paper, we studied the problem of DC verification and discovery. We presented \sys, a DC verification algorithm with near-linear time complexity with respect to the dataset size that leverages prior work on orthogonal range search. We also developed an anytime DC discovery algorithm which does lattice search based on our verification algorithm. Unlike previous works, our discovery algorithm eliminates the reliance on the construction of evidence sets, which can be computationally expensive. Through empirical evaluation, we demonstrated that our DC verification algorithm is faster than the state of the art by an order of magnitude on large-scale production datasets. Our DC discovery algorithm is able to output valid DCs incrementally whereas existing methods fail to provide any useful information. This paper opens up a line of work that can leverage close connections between problems related to denial constraints and computational geometry. Our techniques can be extended to the approximate DC setting as well where we wish to ensure that the number of DC violations is at most some fraction of the dataset. 
Potential directions for future work include considering dynamic data and external memory setting, as well as further improving complexity~\cite{saxena2019distributed,fries1987log}.  
\eat{
We now outline some concrete directions for future work.

% Operations such as generating a provably random sample of DC violations, which is important for users to understand the errors in the dataset without getting overwhelmed with all possible violations, can also be implemented in our framework. We now outline some concrete directions.

\introparagraph{DC discovery and verification over dynamic data} The connections to geometric data structures opens up an avenue to explore how to solve the problems considered in this paper on dynamic data. There is a rich history of dynamic algorithms for both range trees and $k$-d trees, and is likely to allow the development of elegant algorithms with provable guarantees.

\introparagraph{External memory setting} State-of-the art solutions for DC discovery in external memory also use evidence set as a primitive~\cite{saxena2019distributed}. It would be interesting to see how external memory variants of range search data structures compare to existing solutions.

\introparagraph{Further improvements in complexity} Although we are close to the optimal, some of the log factors can be further improved. In particular, for certain range queries (such as 3-sided ranges~\cite{fries1987log}), there exist better solutions compared to using range trees or $k$-d trees. Studying the integration of such solutions into our framework would likely result to more performance improvements.
\endeat}

\balance
\bibliographystyle{ACM-Reference-Format}
\bibliography{reference}
\end{document}